\newcommand{\false}{\mathsf{false}}
\newcommand{\true}{\mathsf{true}}
\newcommand{\nin}{\notin}
\newcommand{\ABBsim}{\mathsf{AB\oB\sim}}
\newcommand{\cM}{\mathcal{M}}
\newcommand{\vel}{\vee}
\newcommand{\et}{\wedge}
\newcommand{\sat}{\models}
\newcommand{\then}{\rightarrow}
\newcommand{\bM}{\mathbf{M}}
\newcommand{\cV}{\mathcal{V}}
\newcommand{\AABB}{\mathsf{A\bar{A}B\bar{B}}}
\newcommand{\AABBsim}{\mathsf{A\bar{A}B\bar{B}}\sim}
\newcommand{\ABB}{\mathsf{AB\bar{B}}}
\newcommand{\sP}{\cP}
\newcommand{\sC}{\cC}
\newcommand{\cS}{\mathcal{S}}
\newcommand{\cT}{\mathcal{T}}
\newcommand{\bG}{\mathbf{G}}
\newcommand{\duplicator}{{\diamond}}
\newcommand{\spoiler}{{\scriptscriptstyle \square}}
\newcommand{\diamondA}{\langle \mathsf{A} \rangle}
\newcommand{\diamondAbar}{\langle \mathsf{\bar{A}} \rangle}
\newcommand{\diamondB}{\langle \mathsf{B} \rangle}
\newcommand{\diamondBbar}{\langle \mathsf{\bar{B}} \rangle}
\newcommand{\diamondG}{\langle \mathsf{G} \rangle}
\newcommand{\boxA}{[\mathsf{A}]}
\newcommand{\boxB}{[\mathsf{B}]}
\newcommand{\boxG}{[\mathsf{G}]}
\newcommand{\sing}{{\pi}}
\newcommand{\unit}{{\mathsf{unit}}}
\newcommand{\boxCap}{[\cap]}
\newcommand{\diamondCap}{\langle \cap! \rangle}
\newcommand{\closure}{\mathsf{closure}}
\newcommand{\eclosure}{\mathsf{closure}^+}
\newcommand{\teclosure}{\mathsf{TF}^+}
\newcommand{\atoms}{\mathsf{atoms}}
\newcommand{\runs}{\mathsf{runs}}
\newcommand{\pruns}{\mathsf{\spoiler{-}pre}}
\newcommand{\spruns}{\mathsf{\spoiler{-}proj}}
\newcommand{\type}{\mathsf{type}}
\newcommand{\obs}{\mathsf{obs}}
\newcommand{\req}{\mathsf{req}}
\newcommand{\depA}{\stackrel{A}{\longrightarrow}}
\newcommand{\depBbar}{\stackrel{\overline{B}}{\longrightarrow}}
\newcommand{\cC}{\mathcal{C}}
\newcommand{\cP}{\mathcal{P}}
\newcommand{\bbD}{\mathbb{D}}
\newcommand{\bbI}{\mathbb{I}}
\newcommand{\bbN}{\mathbb{N}}
\newcommand{\cG}{\mathcal{G}}
\newcommand{\cL}{\mathcal{L}}
\newcommand{\cI}{\mathcal{I}}
\newcommand{\oA}{\overline{A}}
\newcommand{\oB}{\overline{B}}
\newcommand{\img}{\cI mg}
\renewcommand{\paragraph}[1]{\noindent {\bf #1.}~~}
\newtheorem{theorem}{Theorem}
\newtheorem{definition}{Definition}
\newtheorem{proposition}{Proposition}
\newtheorem{lemma}{Lemma}
\newtheorem{corollary}{Corollary}
\title{Interval-based Synthesis}
\author{
Angelo Montanari
\institute{Department of Mathematics and Computer Science\\ University of Udine, Italy}
\email{angelo.montanari@uniud.it}
\and
Pietro Sala 
\institute{Department of Computer Science\\ University of Verona, Italy}
\email{pietro.sala@univr.it}
}
\begin{document}
\maketitle
\begin{abstract}

In this paper, we introduce the synthesis problem for Halpern and Shoham's interval temporal logic \cite{JACM::HalpernS1991} extended with an equivalence relation $\sim$ over time points
($HS\sim$ for short). In analogy to the case of monadic second-order logic of one successor \cite{buchiland69}, 
given an $HS\sim$ formula $\varphi$ and a finite set $\Sigma^T_{\spoiler}$ of proposition letters and temporal requests,
the problem consists of establishing whether or not, for all possible evaluations of elements in $\Sigma^T_{\spoiler}$ in every interval structure, there is an evaluation of the remaining proposition letters and temporal requests such that the resulting structure is a model for $\varphi$. We focus our attention on the decidability of the synthesis problem for some meaningful fragments of $HS\sim$, whose modalities are drawn from 
$\{A \ (meets), \ \bar{A} \ (met \ by), \ B \ (begun \ by), \ \bar{B} \  (begins) \}$, interpreted over finite linear orders and natural numbers. 
We prove that the synthesis problem for
$\ABBsim$ 
over finite linear orders is decidable (non-primitive recursive hard), while 
$\AABB$ turns out to be undecidable. In addition, we show that  if we replace finite linear orders by 
natural numbers, then the problem becomes undecidable even for $\ABB$. 
\end{abstract}
\section{Introduction}\label{sec:intro}

Since its original formulation by Church \cite{Church57}, the synthesis problem has received a lot
of attention in the computer science literature. A solution to the problem was provided by
B\"uchi  and Landweber in \cite{buchiland69}. In the last years, a number of extensions and
variants of the problem have been investigated, e.g., \cite{DBLP:journals/lmcs/Rabinovich07,RabThoCSL07}.
The synthesis problem for (point-based) temporal logic has been addressed in \cite{FMR12,ManWol84,PnuRos89}.

In this paper, we formally state the synthesis problem for interval temporal logic and present some basic 
results about it. We restrict ourselves to some meaningful fragments of Halpern and Shoham's 
modal logic of time intervals \cite{JACM::HalpernS1991} extended with an equivalence relation $\sim$ over 
time points ($HS\sim$ for short). 
The emerging picture is quite different from the one for the classical synthesis problem (for $MSO$).
In \cite{DBLP:journals/lmcs/Rabinovich07}, Rabinovich proves that the decidability of the monadic second-order theory of one successor $MSO(\omega, <)$ extended with a unary predicate $P$ ($MSO(\omega, <, P)$ for short) entails the decidability of its synthesis problem, that is, the synthesis problem for a monadic second-order theory is decidable if and only if its underlying theory is decidable. 
Here, we show that this is not the case with interval temporal logic. We focus our
attention on two fragments of $HS$, namely, the logic $\ABB$ of Allen's relations 
\emph{meets}, \emph{begun by}, and  \emph{begins}, and the logic $\AABB$ obtained
from $\ABB$ by adding a modality for the Allen relation \emph{met by}. In \cite{abb_natural},
Montanari et al.\ showed that the satisfiability problem for $\ABB$ over finite linear orders and
the natural numbers is $EXPSPACE$-$complete$, while, in \cite{abba_finite}, Montanari, Puppis,
and Sala proved that the satisfiability problem for $\AABB$ over finite linear orders is decidable, 
but not primitive recursive (and undecidable over the natural numbers). In this paper, we prove that the
synthesis problem for $\ABB$ over the natural numbers and for $\AABB$ over finite linear
orders turns out to be undecidable. Moreover, we show there is a significant blow up in 
computational complexity moving from the satisfiability to the synthesis problem  for $\ABB$ 
over finite linear orders: while the former is $EXPSPACE$-$complete$, the latter is 
$NON$-$PRIMITIVE$ $RECURSIVE$-$hard$. 
As a matter of fact, such an increase in the complexity is paired with an increase in the expressive 
power of the logic: one can exploit universally quantified variables, that is, propositional letters
under the control of the environment,
to constrain the length of intervals in a way which is not allowed by $\ABB$.

The rest of the paper is organized as follows. In Section \ref{sec:logic}, we introduce syntax and semantics of the logic $\AABB\sim$ and its fragments. In Section \ref{sec:synthdef}, we define the synthesis problem for interval temporal logic, focusing our attention on the considered fragments. The problem is then systematically investigated in the next two sections, where decidable and undecidable instances are identified (a summary of the results is given in Table \ref{table:undecidabilityandcomplexity}). Conclusions provide an assessment of the work and outline future research directions.

\section{The logic $\AABB\sim$ and its fragments}\label{sec:logic}

In this section, we provide syntax and semantics of the fragments
of 
$HS\sim$ we are interested in.
The maximal  fragment that we take into consideration is $\AABB\sim$, which features 
unary modalities $\diamondA$, $\diamondAbar$, $\diamondB$, and $\diamondBbar$ for 
Allen's binary ordering relations \emph{meets}, \emph{met by}, \emph{begun by}, and 
\emph{begins} \cite{allen83}, respectively, plus a special proposition letter $\sim$, to be interpreted
as an equivalence relation. The other relevant fragments are $\AABB$, $\ABBsim$, and $\ABB$.


Formally, let $\Sigma$ be a set of proposition letters, with $\sim \in \Sigma$. Formulas of $\AABB\sim$
are built up from proposition letters in $\Sigma$ by using Boolean connectives $\vel$ and $\neg$,
and unary modalities from the set  $\{\diamondA, \diamondAbar, \diamondB, \diamondBbar\}$. 
Formulas of the fragments $\AABB$, $\ABBsim$, and $\ABB$ are defined in a similar way.
We will often make use of shorthands like $\varphi_1\et\varphi_2 \,=\, \neg(\neg\varphi_1 \vel \neg\varphi_2)$, 
$\boxA\varphi \,=\, \neg\diamondA\neg\varphi$, $\boxB\varphi \,=\, \neg\diamondB\neg\varphi$, 
$\true \,=\, a \vee \neg a$, and $\false \,=\, a \wedge \neg a$, for some $a \in \Sigma$

As for the semantics, let $\bbD=(D,<)$ be a linear order, called \emph{temporal domain}. We denote by $\bbI_\bbD$ the set of all closed intervals $[x,y]$ over $\bbD$, with $x=y$ or $x<y$, abbreviated  $x \leq y$ (non-strict semantics).
%
We call \emph{interval structure} any Kripke structure of the form $\bM=(\bbD, A, \bar{A}, B, \bar{B},\cV)$. 
$\cV:\bbI_\bbD\then\sP(\Sigma)$ is a function mapping intervals to sets of proposition letters.
$A, \bar{A}, B$,  and $\bar{B}$ denote Allen's relations ``meet'', ``met by'', ``begun by'', and 
``begins'', respectively, and are defined as follows:
$[x,y] \mathrel{A} [x',y'] \text{ iff } y=x'$, 
$[x,y] \mathrel{\bar{A}} [x',y'] \text{ iff } x=y'$,
$[x,y] \mathrel{B} [x',y']  \text{ iff } x=x' \et y'<y$, and 
$[x,y] \mathrel{\bar{B}} [x',y'] \text{ iff } x=x' \et y<y'$.
For the sake of brevity, in the following we will write $\bM=(\bbD,\cV)$ for $\bM=(\bbD, A, \bar{A}, B, \bar{B},\cV)$.

Formulas are interpreted over an interval structure $\bM=(\bbD,\cV)$ and an initial interval $I\in \bbI_\bbD$ as follows:
$\bM,I \sat a$ iff $a\in\cV(I)$, $\bM,I \sat \neg\varphi$ iff $\bM,I \not\sat\varphi$,
$\bM,I \sat \varphi_1 \vel \varphi_2$ iff $\bM,I \sat \varphi_1$ or $\bM,I \sat \varphi_2$,
and, for all $R\in\{A,\bar{A},B,\bar{B}\}$,
$$
  \bM,I \sat \langle \mathsf{R} \rangle \varphi
  \qquad\text{iff}\qquad 
  \text{there exists $J\in\bbI_\bbD$ such that $I\mathrel{R} J$ and $\bM,J \sat \varphi$.}
$$
The special proposition letter $\sim$ is interpreted as an equivalence relation over $\bbD$, that is, 
(i) $x\sim x $ for all $x\in D$, (ii) forall $x, y \in D$, if $x\sim y$, then $y\sim x$, and for all $x, y, z\in D$,
if $x\sim y$ and $y\sim z$, then $x\sim z$. Now, for all  $x, y \in D$, with $x\leq y$, $\bM,[x,y]\sat \sim$ if
(and only if) $x\sim y$.
%
%
In the following, we will write $x\sim y$ for  $\bM,[x,y]\sat \sim$ whenever the context, that is, 
the pair $(\bM,[x,y])$, is not ambiguous.

We say that a formula $\varphi$ is \emph{satisfiable} over a 
class $\sC$ of interval structures if $\bM,I\sat\varphi$ for some 
$\bM=(\bbD,\cV)$ in $\sC$ and some interval 
$I\in\bbI_\bbD$.
In the following, we restrict our attention to the class $\sC_{fin}$ of finite linear orders and to (the class
$\sC_{\bbN}$ of linear orders isomorphic to) $\bbN$.
Without loss of generality (we can always suitably rewrite $\varphi$), we assume
the initial interval on which $\varphi$ holds (in a model for it) to be the interval $[0,0]$.

In the following, we will often make use of the following formulas.
The formula $\boxB\false$ (hereafter, abbreviated $\sing$) holds over all and only the \emph{singleton} intervals $[x,x]$. Similarly, the formula $\boxB\boxB\false$ (abbreviated $\unit$) holds over the \emph{unit-length} intervals over a discrete order, e.g., over the intervals of $\bbN$ of the form $[x,x+1]$. Finally, the formula $\boxA\boxA\varphi$ ($\boxG\varphi$ for short), interpreted over the initial interval $[0,0]$,
forces $\varphi$ to hold universally, that is, over all intervals.
For the sake of readability, from now on, we will denote by $\Sigma$ the set 
of all and only those proposition letters that appear in the formula $\varphi$ under
consideration, thus avoiding  tedious parametrization like $\Sigma(\varphi)$
(it immediately follows that $\Sigma$ is always assumed to be a finite set of proposition
letters). 

Given an $\AABBsim$ formula $\varphi$, we define its  \emph{closure} as the set $\closure(\varphi)$ 
of all its sub-formulas and all their negations (we identify $\neg\neg\psi$ with $\psi$, $\neg\diamondA\psi$ with $\boxA\neg\psi$, and so on). 
For a technical reason that will be clear soon, we also introduce the \emph{extended closure} of $\varphi$, 
denoted by $\eclosure(\varphi)$, that extends $\closure(\varphi)$ by adding all formulas 
of the form $\langle \mathsf{R} \rangle \psi$ and $[R]\psi$, for $R\in\{A, B, \bar{A}, \bar{B}\}$ 
and $\psi\in\closure(\varphi)$. Moreover, we denote by $\teclosure(\varphi)\subseteq \eclosure(\varphi)$ 
the set 
$\{\langle R\rangle \psi \in \eclosure(\varphi):  R\in \{A, \oA, B, \oB\}
\}$. From now on, given an $\AABBsim$ formula $\varphi$, we will denote
by $\Sigma^T$ the set $\Sigma \cup TF^+(\varphi)$.

\section{The synthesis problem for interval temporal logic}\label{sec:synthdef}

We are now ready to define the synthesis problem for the interval logic $\AABBsim$ (the
definition immediately transfers to all its fragments) with respect to the class
of finite linear orders and to (any linear order isomorphic to) $\bbN$. Without loss 
of generality, we will refer to a linear order which is either $\bbN$ or one of its finite prefixes. 
To start with, we introduce the notion of admissible run.
\begin{definition}\label{def:admrun}
Let $\varphi$ be an $\AABBsim$ formula  and let $\Sigma^{T}_{\spoiler}\subseteq \Sigma^T$.
An \emph{admissible run} $\rho$ on the pair $(\varphi,\Sigma^{T}_{\spoiler})$
is a finite or infinite sequence  of pairs $\rho=([x_0,y_0],\sigma_0)([x_1,y_1],$ $ \sigma_1)\ldots$ 
such that:
\begin{enumerate}

\item if $\rho$ is finite, that is, $\rho=([x_0,y_0],\sigma_0)\ldots([x_n,y_n], \sigma_n)$, then there 
exists $m>0$ such that $n= 2 \cdot m \cdot (m+1)$ and, for each $[x,y]\in \bbI(\{0, \ldots, m-1\})$, there 
exists $0\leq i\leq n$ such that $[x,y]=[x_i,y_i]$, while if $\rho$ is infinite, then, for every $[x,y]\in 
\bbI(\bbN)$, there exists 
$i \geq 0$ such that $[x,y]=[x_i,y_i]$;

\item $[x_0,y_0]=[0,0]$, for all $0 < i \ (\leq n)$,  $[x_i,y_i]\in \bbI(\bbN)$,  and for every even index $i$,
$[x_i,y_i]=[x_{i+1}, y_{i+1}]$, $\sigma_i\subseteq\Sigma^T_{\spoiler}$ (the set of proposition letters and
temporal requests in $\Sigma^T_{\spoiler}$ true on $[x_i,y_i]$), $\sigma_{i+1} \subseteq \Sigma^T \setminus \Sigma^T_{\spoiler}$ (the set of proposition letters and temporal requests in $\Sigma^T \setminus 
\Sigma^T_{\spoiler}$ true on $[x_i,y_i]$), and for all $j$, with $j \neq i$ and $j \neq i+1$, $[x_i,y_i]\neq [x_j,y_j]$; 

\item if $y_{i+1}\neq y_i$, then $y_{i+1}=y_i+1$ and for all $[x,y]\in \bbI(\bbN)$, with $y<y_{i+1}$, there exists
$0\leq j<i+1$ such that $[x_j,y_j]=[x,y]$.  

\end{enumerate}
\end{definition}
Conditions $1$-$3$ define the rules of a possibly infinite game between two players $\spoiler$ (spoiler) and  $\duplicator$  (duplicator), which are responsible of the truth values of proposition letters and temporal requests
in $\Sigma^T_{\spoiler}$ and $\Sigma^T\setminus \Sigma^T_{\spoiler}$, respectively.
The game can be informally described as follows. At the beginning, $\spoiler$ chooses an interval 
$[x,y]$ and defines his labeling for $[x,y]$; $\duplicator$ replies to $\spoiler$ by defining her 
labeling for $[x,y]$ as required by condition 2.
In general, $\spoiler$  makes his moves at all even indexes, while $\duplicator$ executes her moves at all odd indexes 
by completing the labeling of the interval chosen by $\spoiler$ at the previous step. Condition $1$ guarantees that all  intervals on $\bbN$ (infinite case) or on a finite prefix of it (finite case) are visited by the play. Condition  $2$ forces every visited interval to be visited exactly once.  Condition $3$ imposes an order according to which intervals are visited. More
precisely,  condition $3$ prevents $\spoiler$ from choosing an interval $[x,y]$ before he has visited all intervals $[x'y']$, with $y'<y$, that is, $\spoiler$ cannot jump ahead along the time domain without first defining the labeling of all intervals ending at the points he would like to cross.
 
Let $\runs(\varphi, \Sigma^T_{\spoiler})$  be the set of of all possible admissible runs on the pair $(\varphi,\Sigma^{T}_{\spoiler})$. We denote by $\pruns(\varphi, \Sigma^T_{\spoiler})$ the set of all odd-length finite 
prefixes of admissible runs in $\runs(\varphi, \Sigma^T_{\spoiler})$, that is, prefixes in which the last move
was done by $\spoiler$, and by $\spruns(\varphi, \Sigma^T_{\spoiler})$ the set of all infinite subsequences 
of admissible runs in $\runs(\varphi, \Sigma^T_{\spoiler})$ that contain all and only the pairs occurring at 
even positions (formally, $\spruns(\varphi, \Sigma^T_{\spoiler})=\{ \rho'=([x_0,y_0], \sigma_0)([x_1,y_1], 
\sigma_2)\ldots: \exists \rho \in \runs(\varphi, \Sigma^T_{\spoiler}) \text{ such that } \forall i  ( \rho[2i]=\rho'[i])   \}$).
%
It can be easily seen that an admissible run $\rho$ provides a labeling $\cV$ for some 
candidate model of $\varphi$ by enumerating all its intervals $[x,y]$ following the order of their right 
endpoints $y$, that is,
for any point $y$, intervals of the form $[x,y]$  may appear in $\rho$ shuffled in an arbitrary order, which depends
on  the choices of $\spoiler$, but if  $\rho$ features a labelled interval $[x,y+1]$, then all labeled intervals $[x',y]$,  
with $0\leq x'\leq y$, must  occur in $\rho$ before it.
For any $\rho=([x_0,y_0], \sigma_0)([x_1,y_1], \sigma_1)\ldots$ in $\runs(\varphi, \Sigma^T_{\spoiler})$,
we denote by $\rho_I$ and by $\rho_{\sigma}$ the sequence $[x_0,y_0] [x_1,y_1] \ldots $ and the  sequence $\sigma_0 \sigma_1\ldots$ obtained by projecting $\rho$ on its first component and its second component, respectively.

Any admissible run $\rho$ on 
$(\varphi,\Sigma^{T}_{\spoiler})$ induces an interval structure $\bM_{\rho}=(\bbD, \cV)$,
called \emph{induced structure},
where $\bbD=\bbN$, if $\rho$ is infinite, or $\bbD=\{ 0< \ldots< m-1 \}$, with $|\rho|=2 \cdot m \cdot(m+1)$ (such an $m$ 
exists by definition of finite admissible run) otherwise, and $\cV([x,y])=(\rho[i]\cap \Sigma) \cup (\rho[i+1]\cap\Sigma)$,
where $i$ is even and $\rho_I[i]=[x,y]$.
Both the existence and the uniqueness of such an index $i$ are guaranteed by condition 2 of the definition of admissible run,  and thus the function $\cV$ is correctly defined. In particular, we can define a (unique) bijection $f_{\rho}:\bbI(\bbD)\rightarrow \bbN$ such that, for each $[x,y]\in \bbI(\bbD)$, $f_{\rho}([x,y])$ is even and $\rho_I[f_{\rho}([x,y])]=[x,y]$.

We say that an admissible run $\rho$ is \emph{successful} if and only if 
$\bM_{\rho} (=(\bbD, \cV)),[0,0]\models \varphi$ and for all $[x,y] \in \bbI(\bbD)$ and  $\psi\in \teclosure(\varphi)$, it holds that $\bM_\rho,[x,y]\models \psi$ iff $\psi \in \rho_\sigma[f_{\rho}([x,y])]\cup \rho_\sigma[f_{\rho}([x,y])+1]$. 
Given  a pair $(\varphi,\Sigma^{T}_{\spoiler})$, a \emph{$\Sigma^{T}_{\spoiler}$-response strategy} is a function $S_{\diamond}:\pruns(\varphi,\Sigma^{T}_{\spoiler})\rightarrow \cP(\Sigma^T\setminus\Sigma^{T}_{\spoiler})$. 
Moreover, given an infinite $\Sigma^{T}_{\spoiler}$-response strategy $S_{\diamond}$ and a sequence $\rho_{\spoiler}=([x_0,y_0], \sigma_0)([x_1,y_1], \sigma_1)\ldots$ in $\spruns(\varphi,\Sigma^{T}_{\spoiler})$, 
we define the \emph{response of $S_{\diamond}$ to $\rho_{\spoiler}$} as the infinite admissible run $\rho=([x'_0,y'_0], \sigma'_0)([x'_0,y'_0],\sigma'_0) \ldots$, where, for all $i\in \bbN$, $\rho[2i]=\rho_{\spoiler}[i]$ 
and $\rho[2i+1]=(\rho_{I}[2i], S_{\diamond}(\rho[0\ldots 2i]))$. 

The \emph{finite synthesis} problem for $\AABBsim$, that is, the winning condition for $\duplicator$ on the game defined by conditions $1$-$3$, can be formulated as follows.

\begin{definition}\label{synthesis}
Let  $\varphi$  be an $\AABBsim$ formula and $\Sigma^{T}_{\spoiler}\subseteq \Sigma^T$. We say that the pair $(\varphi,\Sigma^{T}_{\spoiler})$ admits a \emph{finite synthesis} if and only if there exists a $\Sigma^{T}_{\spoiler}$-response strategy $S_{\diamond}$ such that for every $\rho_{\spoiler} \in \spruns(\varphi,\Sigma^{T}_{\spoiler})$, the response $\rho$ of $S_{\diamond}$ to $\rho_{\spoiler}$ has a finite prefix $\rho[0\ldots n]$ which is a successful admissible run.
\end{definition}
By definition of (finite) admissible run, there exists $m$ such that $n=2 \cdot m \cdot (m+1)$.
Basically, when the labeling is completed  for the intervals ending at some point $y' \leq y$
and  $\bM_{\rho[0\ldots 2 \cdot (y+1) \cdot (y+2)]}$ is a model for $\varphi$, then $\duplicator$ 
wins and she can safely ignore the rest of the run $\rho$ (as it happens with reachability games).
To generalize the above definition to the $\bbN$-synthesis problem, it suffices to drop the prefix 
condition of Definition \ref{synthesis} and to constrain $\rho$ to be a successful admissible run. 
In general, we say that a pair $(\varphi,\Sigma^{T}_{\spoiler})$ is a \emph{positive instance} of the finite synthesis
(resp., $\bbN$-synthesis) problem if and only if it admits a finite synthesis (resp.,  $\bbN$-synthesis). 
A $\Sigma^{T}_{\spoiler}$-response strategy $S_{\diamond}$, which witnesses that  $(\varphi,\Sigma^{T}_{\spoiler})$
is a positive instance of the finite synthesis (resp., $\bbN$-synthesis) problem, is called a \emph{winning strategy}.

We conclude the section by showing how to exploit the finite synthesis problem to express in $\ABB$ 
(the smallest fragment we consider in this work) a temporal property that can be expressed neither in $\ABB$ nor in $\AABB$ (in the usual satisfiability setting).
While there is a common understanding of what is meant by enforcing a property on a model via satisfiability, 
such a notion has various interpretations in the synthesis framework.
We assume the following interpretation: forcing a property $P$ on a model 
means requiring that for all $\Sigma^{T}_{\spoiler}$-response winning strategies $S_{\diamond}$, there exists 
a sequence $\rho_{\spoiler} \in \spruns(\varphi, \Sigma^T_{\spoiler})$ such that the run $\rho$, which is the 
response of $S_{\duplicator}$ to $\rho_{\spoiler}$, features the property $P$ on all possible models 
$\bM_{\rho[0,\ldots n]}$, where $\rho[0,\ldots n]$ is a successful admissible run. This amounts to say that no matter how $\duplicator$ plays, if she wants to win, then there is always a choice for $\spoiler$ that constrains  property $P$ to hold on the model that is built at the end of the play (that is, $\duplicator$ cannot win avoiding property $P$).

Let us consider, for instance, the following property:
\emph{there exists at least one occurrence of an event (of type) $e_1$, each occurrence of $e_1$ is followed by an occurrence of an event (of type) $e_2$, occurrrences of $e_1$ are disjoint, occurrences of $e_2$ are disjoint, occurrences of $e_1$ and $e_2$ are disjoint,  and for every two consecutive occurrences of $e_1$ and $e_2$, the duration of the occurrence of $e_2$ is greater than or equal to the duration of the occurrence of $e_1$.} \\
\begin{wrapfigure}{r}{0.45\textwidth}
\vspace{-0.5cm}
\centering
\includegraphics{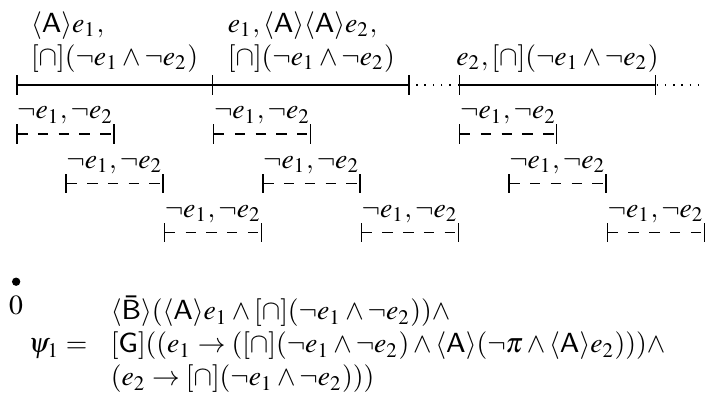}
\vspace{-1cm}

\end{wrapfigure}
%
In the following, we specify the input  $(\varphi,\{corr_{\spoiler}\})$ of a synthesis problem, where $\varphi$ is defined as the conjunction $\psi_0 \wedge \psi_1\wedge \psi^1_2\wedge\psi^2_2\wedge \psi_3 \wedge \psi_4$ and $\spoiler$ controls the proposition letter $corr_{\spoiler}$ only.
To simplify the encoding, we will make use two auxiliary modalities $\boxCap\psi =\boxB\boxA \psi \wedge \boxB \psi$  
and $\diamondCap \psi =\boxB\neg \psi \wedge \diamondB\diamondA \psi \wedge \boxB( \diamondA \psi \rightarrow \boxB\boxA\neg \psi)$. By definition, $\boxCap\psi$ holds on an interval $[x,y]$ if $\psi$ holds on all intervals beginning at some $z$, with $x\leq z<y$, and different from $[x,y]$, while $\diamondCap\psi$ holds on $[x,y]$ if there exists one and only one interval beginning at some $z$, with $x < z<y$, on which $\psi$ holds. 
For the sake of simplicity, we constrain $e_1$ and $e_2$ to hold only over intervals with a duration by means of the formula $\boxG((e_1 \vee e_2) \rightarrow \neg \pi)$ (formula $\psi_0$).

Formula $\psi_1$ takes care of  the initial condition (there exists at least one occurrence of $e_1$)  and of the relationships between $e_1$- and $e_2$-labeled intervals. Its first conjuct forces the first event to be $e_1$. The second one (whose outermost operator is $\boxG$) constrains events $e_1$ to be pairwise disjoint and disjoint from events $e_2$, forces each $e_1$-labeled interval to be followed by an $e_2$-labeled one, and constrains events $e_2$ to be pairwise disjoint and disjoint from events $e_1$.

Formula $\psi^1_2$ forces an auxiliary proposition letter $end_1$ to hold only at the right endpoint of $e_1$-labeled intervals. 
%
The first conjunct forces $\neg end_1$ to hold on all intervals that preceeds the first occurrence of an $e_1$-labeled interval. The second one (whose outermost operator is $\boxG$) forces every $e_1$-labeled interval to meet an 
$end_1$-labeled interval and prevents $end_1$ labeled-intervals from occurring inside an $e_1$-labeled interval.
Moreover, it forces $end_1$ to hold on point intervals only and constrains all point-intervals (but the first one)
that  belong to an interval that connects two consecutive $e_1$-labeled intervals to satisfy $\neg end_1$.
Formula $\psi^2_2$ imposes the very same conditions on proposition letter $end_2$ with respect to 
$e_2$-labelled intervals, and it can be obtained from $\psi^1_2$ by replacing $end_1$ by $end_2$  
and $e_1$ by  $e_2$.\\
%
%
\begin{figure}[h!]
\centering
\includegraphics{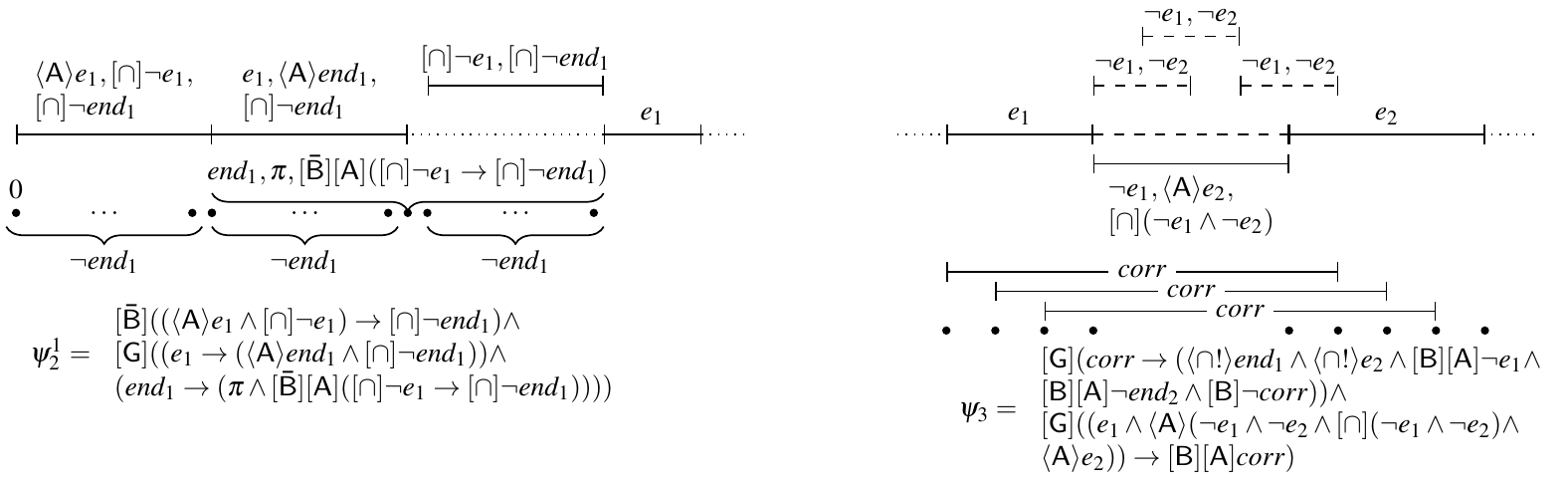}
\end{figure}

Formula $\psi_3$ makes use of the proposition letter $corr$ to establish a correspondence between consecutive
$e_1$- and $e_2$-labeled intervals, that is, $corr$ maps points belonging to an  $e_1$-labeled interval $[x,y]$ 
to points belonging to an $e_2$-labeled $[x',y']$ if and only if there is no point $y<y''<y'$ that begins an $e_1$- or an $e_2$-labeled interval.
The first conjunct (whose outermost operator is $\boxG$) constrains every $corr$-labeled interval 
to cross exactly one point labeled with $end_1$ and to include the starting point of exactly one
$e_2$-labeled  interval. 
Moreover, it prevents $e_1$-labeled intervals to begin and $e_2$-labeled interval to end
at a point belonging to a $corr$-labeled interval. Finally, it allows at most one  $corr$-labeled interval 
to start at any given point.
The second conjunct (whose outermost operator is $\boxG$) forces a $corr$-labeled interval
to start at any point belonging to an $e_1$-labeled interval $[x,y]$ which has an  $e_2$-labeled 
interval as its next $e_i$-labeled interval, with $i\in \{1,2\}$.
All in all, it constrains $corr$-labeled intervals to connect all points belonging to the $e_1$-labeled
interval  to the points belonging to the next $e_2$-labeled interval.  Since exactly one $corr$-labeled 
interval can start at each point in $e_1$, $corr$ can be viewed as a function from points in $e_1$ to 
points in $e_2$.
To capture the intended property, however, we further need to force such a function to be injective. 
This is done by formula $\psi_4$, which exploits the interplay between $\spoiler$ and $\duplicator$.
\begin{wrapfigure}{r}{0.43
\textwidth}
\centering
\includegraphics{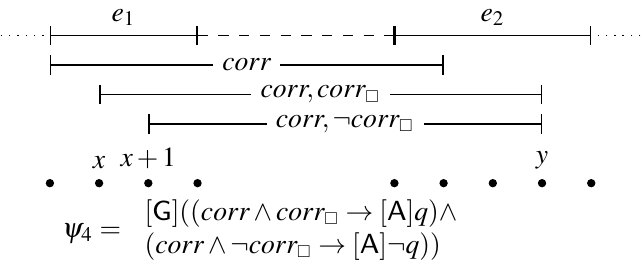}
%
%
%
%
%
%
\end{wrapfigure}

It is worth pointing out that if we take a look at formula $\psi_4$ from the point of view of the
satisfiability problem, it does not add any constraint to the proposition letter $corr$. 
Indeed, $\psi_4$ can be trivially satisfied by forcing $corr_{\spoiler}$ and $q$ to be
always true or always false in the model.
In the context of the (finite) synthesis problem, things are different:
$\duplicator$ has no control on the proposition letter $corr_{\spoiler}$,
and if she tries to violate injectivity of $corr$ (as depicted in the graphical account for $\psi_4$),
then $\spoiler$ has a strategy to win, as shown by the following (portion of a) run:
\vspace{-0.3cm}
\begin{center}
\includegraphics{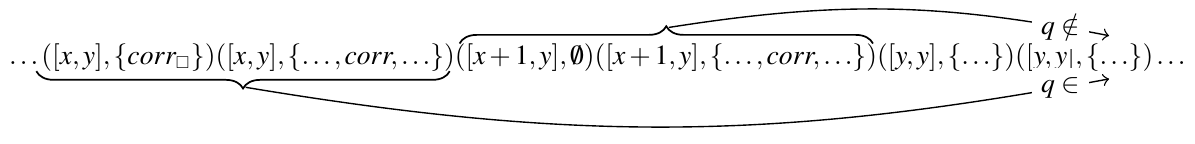}
\end{center}
\vspace{-0.3cm}
%
In general, $\spoiler$ has a strategy to impose that  for each point $y$, there exists at most one $corr$-labeled interval.
Suppose that, at a certain position of the run, $\rho$ $\spoiler$ and $\duplicator$ are playing on the labeling of all intervals ending at a given point $y$. For each $0\leq x\leq y$, $\spoiler $, who always plays first, may choose any value for $corr_{\spoiler}$ on $[x,y]$ until $\duplicator$ chooses to put $corr$ in her reply to a move of $\spoiler$ on an interval $[x',y]$,  for some $x'$.
From that point on, for all intervals $[x'',y]$ which have not been labeled yet, if  $\spoiler$ has put $corr_{\spoiler}$ on 
$[x',y]$, he will label $[x'',y]$ with $\neg corr_{\spoiler}$, and if he has put  $\neg corr_{\spoiler}$ on $[x',y]$, he will label $[x'',y]$ with $ corr_{\spoiler}$.

\section{The big picture: undecidability and complexity reductions}\label{section:undecidability}

\begin{table}
\footnotesize
\centering
\begin{tabular}{cccc}
Logic & Linear Order & Satisfiability & Synthesis \\
\hline
\hline\\[-2ex] 
$\ABB$ & Finite & \begin{tabular}{c} Decidable \cite{abb_natural}  (EXPSPACE-complete) \end{tabular} &
\begin{tabular}{c} Decidable (\textbf{NonPrimitiveRecursive-hard}) \end{tabular}  \\
\hline
$\ABB$ & $\bbN$ & \begin{tabular}{c} Decidable \cite{abb_natural}
\\ (EXPSPACE-complete)  \end{tabular} & \textbf{Undecidable}\\
\hline
$\ABB\sim$ & Finite &\begin{tabular}{c} Decidable \cite{montanari2013adding} \\ (NonPrimitiveRecursive-hard)  \\\end{tabular} & 
\begin{tabular}{c} 
\textbf{Decidable} \\ (NonPrimitiveRecursive-hard)
\end{tabular} \\
\hline\\[-2ex] 
$\ABB\sim$ & $\bbN$ & Undecidable \cite{montanari2013adding} & 
Undecidable \\
\hline
$\AABB$ & Finite &\begin{tabular}{c} Decidable \cite{abba_finite} \\ (NonPrimitiveRecursive-hard)  
\\
\end{tabular} & 
\textbf{Undecidable} \\
\hline\\[-2ex] 
$\AABB$ & $\bbN$ & Undecidable \cite{abba_finite} & 
Undecidable \\
\hline\\[-2ex] 
$\AABB\sim$ & Finite & Undecidable \cite{montanari2013adding} & 
Undecidable \\
\hline\\[-2ex] 
$\AABB\sim$ & $\bbN$ & Undecidable & 
Undecidable \\
\hline
\end{tabular} 
\caption{Decidability and complexity of the satisfiability and synthesis problems for the considered $HS$ fragments. Results written in bold are given in the present work; results with no explicit reference immediately follow from those given
in this paper or in other referred ones.}
\label{table:undecidabilityandcomplexity}
\vspace{-0.5cm}
\end{table}

In this section, we state (un)decidability and complexity results for the synthesis problem for the three fragments $\ABB, \ABBsim$, and $\AABB$. We consider both finite linear orders and the natural numbers.
The outcomes of such an analysis are summarized in Table \ref{table:undecidabilityandcomplexity}.
All the reductions we are going to define make use of (Minsky) counter machines or of their lossy variants.
A \emph{ (Minsky) counter machine} \cite{Minsky:1967:CFI:1095587} is a triple 
$\cM=(Q,k,\delta)$, where $Q$ is a finite set of 
states, $k$ is 
the number of counters, whose values range over $\bbN$, and $\delta$ is 
a function that maps each state $q\in Q$ to a transition rule having one 
of the following forms:
\begin{itemize}
  \item $\texttt{inc}(i) \texttt{ and goto}(q')$, where $i\in\{1,...,k\}$ is a 
        counter and $q'\in Q$ is a state: 
        whenever $\cM$ is in state $q$, then it first increments the value
        of counter $i$ and then it moves to state $q'$;
  \item $\texttt{if}\; i=0 \texttt{ then goto}(q') \texttt{ else dec}(i)\texttt{ and goto}(q'')$, 
        where $i\in\{1,...,k\}$ is a counter and $q',q''\in Q$ are states: 
        whenever $\cM$ is in state $q$ and the value of the counter $i$ is equal to $0$ 
        (resp., greater than $0$), then 
        $\cM$ moves to state $q'$ (resp., it decrements the value of $i$ and moves to state $q''$).
\end{itemize}
A computation of $\cM$ is any sequence of configurations that conforms to the semantics of 
the transition relation. In the following, we will define and exploit a suitable reduction from the problem 
of deciding, given a  counter machine $\cM=(Q,k,\delta)$ and a pair of control states $q_{\mathsf{init}}$ and
$q_{\mathsf{halt}}$, whether or not every computation of $\cM$ that starts at state $q_{\mathsf{init}}$, 
with all counters initialized to $0$, eventually reaches the state $q_{\mathsf{halt}}$, with the values
of all counters equal to $0$ (\emph{$0$-$0$ reachability} problem). 
\begin{theorem}\label{thm:minsky}  (\cite{Minsky:1967:CFI:1095587})
The $0$-$0$ reachability problem for counter machines is undecidable.
\end{theorem}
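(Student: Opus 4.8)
Since the transition function $\delta$ is deterministic, every configuration has a unique successor, so the $0$-$0$ reachability problem amounts to asking whether the single computation starting from $(q_{\mathsf{init}}, 0, \ldots, 0)$ ever reaches $(q_{\mathsf{halt}}, 0, \ldots, 0)$. The plan is to reduce the halting problem for Turing machines to this question, thereby establishing undecidability. First I would recall that counter machines are Turing-complete, so it suffices to exhibit a uniform translation from an arbitrary Turing machine $T$ to a counter machine $\cM_T$ whose $0$-$0$ reachability instance is positive exactly when $T$ halts on the empty input.

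\emph{Simulating a Turing machine.} The key step is to represent the tape of $T$ by a pair of counters: reading the tape alphabet as digits in a fixed base $b$, I would store the (reversed) contents to the left of the head in one counter and the contents to the right in another, so that each counter holds a natural number. A single step of $T$ then reduces to inspecting the least significant digit of one counter (a division-with-remainder by $b$), possibly rewriting it, and shifting the head by transferring a digit between the two counters. The content of this step is that the arithmetic primitives needed --- multiplication by $b$, integer division by $b$, and testing the residue modulo $b$ --- can all be programmed from the elementary operations $\texttt{inc}$, $\texttt{dec}$, and the zero test, using a constant number of auxiliary counters and small loops. This is the heart of the construction and is where the real work lies.

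\emph{Reducing the number of counters and cleaning up.} If one insists on a fixed number $k$ of counters (e.g.\ two), the several working counters $c_1, \ldots, c_r$ above can be packed into a single counter holding the G\"odel code $p_1^{c_1} p_2^{c_2} \cdots p_r^{c_r}$, where $p_1, p_2, \ldots$ are distinct primes, with one extra counter serving as scratch space; incrementing $c_i$ becomes multiplication by $p_i$, decrementing becomes division by $p_i$, and the zero test on $c_i$ becomes a divisibility test by $p_i$ --- again expressible via the primitives of the previous paragraph. Finally, to match the $0$-$0$ formulation, I would append to the halting state of $\cM_T$ a short cleanup routine that, counter by counter, repeatedly decrements until the zero test succeeds, and only then enters $q_{\mathsf{halt}}$; this guarantees that $q_{\mathsf{halt}}$ is reached with all counters zero precisely when $T$ halts.

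Putting these together, $T$ halts on the empty input if and only if the unique computation of $\cM_T$ from $(q_{\mathsf{init}}, 0, \ldots, 0)$ reaches $(q_{\mathsf{halt}}, 0, \ldots, 0)$. Since the halting problem is undecidable, so is the $0$-$0$ reachability problem. I expect the only genuine obstacle to be the faithful implementation of the arithmetic operations (multiplication, division, and modular tests) by counter-machine instructions; the tape encoding, the G\"odel packing, and the cleanup are then routine compositions of these gadgets.
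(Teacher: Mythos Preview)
Your argument is the classical one and is essentially correct: encode a Turing-machine tape as a pair of base-$b$ integers, implement digit extraction and shifting via multiplication and division by $b$ using $\texttt{inc}$, $\texttt{dec}$, and zero tests, optionally G\"odel-pack the auxiliary counters, and finish with a cleanup loop so that $q_{\mathsf{halt}}$ is entered only with all counters zero. There is no genuine gap; the arithmetic gadgets you flag as the main obstacle are indeed the only place where care is needed, and they are standard.

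As for comparison with the paper: the paper does not prove this theorem at all. It is stated with a citation to Minsky's 1967 book and used as a black box for the subsequent reductions (Theorem~\ref{thm:aabbundecidabilitymachine} and its corollary). So your proposal goes well beyond what the paper itself provides; it reconstructs the classical argument that the citation points to, rather than paralleling anything in the paper's own text.
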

If, given a configuration $(q,\bar{z})\in Q\times\bbN^k$, we allow a counter machine $\cM$
to non-deterministically execute an internal (lossy) transition and to move to a configuration 
$(q,\bar{z}')$, with $\bar{z}'\le\bar{z}$ (the relation $\le$ is defined component-wise 
on the values of the counters), we obtain a lossy counter machine.  
\begin{theorem}\label{thm:lossy}  (\cite{cheat_sheet})
The $0$-$0$ reachability problem for lossy counter machines is decidable
with NonPrimitive Recursive-hard complexity.
\end{theorem}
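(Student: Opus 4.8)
The plan is to prove the two halves separately: decidability via the theory of well-structured transition systems, and the non-primitive recursive lower bound via a reduction forcing a lossy machine to compute an Ackermannian function reliably.

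For decidability, I would first equip the configuration space $Q\times\bbN^k$ with the quasi-order $(q,\bar z)\preceq(q',\bar z')$ iff $q=q'$ and $\bar z\le\bar z'$ componentwise. By Dickson's lemma $\bbN^k$ under the componentwise order is a well-quasi-order (wqo), and since $Q$ is finite the product order $\preceq$ is again a wqo. Next I would define the lossy step relation $\Rightarrow$ as a perfect transition of $\cM$ possibly preceded and followed by spontaneous decreases of the counters, and check that $\Rightarrow$ is monotone with respect to $\preceq$, i.e.\ that a step from a configuration can be matched from any larger one. The only delicate case is the zero-test rule $\texttt{if}\;i=0\,\ldots$: forward monotonicity of its zero-branch would fail for a perfect machine, but lossiness repairs it, since from a larger configuration one may first lose counter $i$ down to $0$ and only then fire the zero-branch, landing above the original successor. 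Hence $(Q\times\bbN^k,\Rightarrow,\preceq)$ is a WSTS with computable predecessor bases.

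I would then observe that $0$-$0$ reachability collapses to control-state reachability: since losing is always available, $(q_{\mathsf{halt}},\bar 0)$ is reachable iff some $(q_{\mathsf{halt}},\bar z)$ is, i.e.\ iff the upward-closed target $\{(q_{\mathsf{halt}},\bar z):\bar z\in\bbN^k\}$ is coverable from $(q_{\mathsf{init}},\bar 0)$. Coverability is decided by backward reachability: starting from $U_0=\{(q_{\mathsf{halt}},\bar z):\bar z\in\bbN^k\}$, iterate $U_{n+1}=U_n\cup\mathit{Pred}(U_n)$. Each $U_n$ is upward-closed (by monotonicity) and finitely represented by its minimal elements, and the increasing chain $U_0\subseteq U_1\subseteq\cdots$ must stabilise after finitely many steps, because a wqo admits no infinite strictly increasing sequence of upward-closed sets. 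Deciding whether $(q_{\mathsf{init}},\bar 0)$ belongs to the resulting fixpoint settles the problem.

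For the lower bound I would reduce from the computation (or thresholding) of the Ackermann function $A$, whose growth dominates every primitive recursive function. From $n$ one builds a lossy counter machine that computes $A(n)$ by nested loops and then tests whether a designated counter reaches a target value. The crux is a robustness construction: the transitions carry conserved budget counters arranged so that any loss strictly and irreparably decreases a quantity that must be restored exactly in order to enter $q_{\mathsf{halt}}$; consequently every accepting lossy run is in fact loss-free and faithfully performs the perfect computation. Reachability then encodes whether the reliable machine attains a value as large as $A(n)$, so any decision procedure must expend non-primitive recursive resources, yielding the stated hardness. I expect the main obstacle to be precisely this robustness engineering: the decidability direction is routine once the WSTS machinery is in place, whereas the lower bound hinges on designing the budget gadgets so that lossy runs cannot cheat their way to $q_{\mathsf{halt}}$, which is what transfers the Ackermannian growth of $A$ into an Ackermannian complexity lower bound.
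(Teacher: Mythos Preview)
The paper does not prove this theorem at all: it is quoted as a known result from the literature (the citation \texttt{cheat\_sheet}, i.e.\ Schnoebelen's survey on lossy counter machines), exactly as Theorem~\ref{thm:minsky} is quoted from Minsky. There is therefore no ``paper's own proof'' to compare against.

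That said, your sketch is essentially the standard argument found in that literature. Decidability via the WSTS framework (Dickson's lemma on $\bbN^k$, monotonicity of the lossy step relation, backward saturation of upward-closed predecessor sets) is precisely the classical route; your observation that lossiness repairs forward monotonicity of the zero-test branch is the key point, and the reduction of $0$-$0$ reachability to coverability is correct. For the lower bound your outline is also in the right spirit: Schnoebelen's proof builds a lossy machine whose accepting runs must be loss-free and must simulate a perfect computation of Ackermannian length, using ``budget'' or ``yardstick'' counters so that any loss is detected before reaching $q_{\mathsf{halt}}$. Your description of this robustness gadget is accurate at the level of a plan, though the actual construction (nested Hardy/Ackermann hierarchies encoded in counters, with a verification phase that checks the budget was exactly consumed) is where all the technical work lies; you have correctly identified this as the main obstacle.
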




Notice that, given a counter machine $\cM$, a computation where lossy transitions have been never executed, namely, a \emph{perfect computation}, is a lossy computation, while, in general, a lossy computation cannot be turned into a computation which does not execure any lossy transition.

\begin{wrapfigure}{r}{0.6\textwidth}
\vspace{-0.7cm}
\hspace{-0.3cm} 
\includegraphics{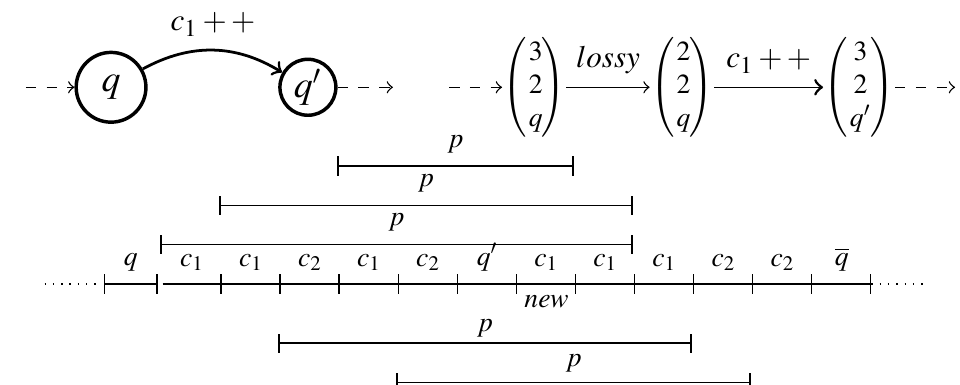}

\caption{\hspace{-0.3cm} Encoding of a lossy computation  in $\AABB$: incrementing states.}\label{fig:forwardinc}
\vspace{-0.5cm}
\end{wrapfigure}

Since lossy transitions are not under the control of the machine $\cM$ and they may take place at each
state of the computation, lossy computations and perfect computations can be viewed as two particular 
semantics for the computations of the same machine, the former being more relaxed (that is, it allows, in
general, a larger number of successful computations) than the latter.
Now we prove that the synthesis problem for $\AABB$ over finite linear orders is undecidable.
We elaborate on a result given by Montanari et al.\ in  \cite{abba_finite}, where, for any  
counter machine $\cM$, a formula $\varphi_{\cM}$ is given such that $\varphi_{\cM}$ is 
satisfiable over finite linear orders if and only if the corresponding counter machine $\cM$
has a $0$-$0$ lossy computation for two given states $q_0$ and $q_f$.
The idea is to encode the successful computation in a model for  $\varphi_{\cM}$. 
In the following, we will first briefly recall the key ingredients of such an encoding; then, we 
will show how to extend $\varphi_{\cM}$ with an additional formula that actually introduces a
new constraint only in the finite synthesis setting.
We start with a short explanation of how the basic features of a (candidate) model for $\varphi_{\cM}$ 
can be enforced. To help the reader, we provide a graphical account of the technique (interval structure 
in Figure \ref{fig:forwardinc}). Each configuration is encoded by means of a sequence of consecutive 
unit intervals. The first unit interval of any such sequence is labeled with a propositional letter $q_i$, 
where $q_i$ is a state of $\cM$. A unary encoding of the values of the counters is then provided by
making use of the unit intervals in between (the unit interval labeled with) $q_i$ and the next unit 
interval labeled with a state of $\cM$, say, $q_j$. Any such unit interval is labeled by exactly one proposition
letter $c_i$, with $i \in \{1,\ldots, k\}$. More precisely,  for all $i \in \{1,\ldots, k\}$, the value of the 
counter $c_i$ in the configuration beginning at $q_i$ is given by the number of $c_i$-labelled unit 
intervals between $q_i$ and $q_j$. 
We show now how to encode the two kinds of transition of $\cM$.  Let $q$ be the current
state. We first consider the case of increasing transitions of the form $\texttt{inc}(i) \texttt{ and 
goto}(q')$. It is easy to write a formula that forces the unit interval labeled with a state of $\cM$ next 
to $q$ to be labeled with $q'$. It is also easy to force exactly one $c_i$-labeled unit interval 
in the next configuration 
to be labeled with a special proposition letter $new$, to identify the $c_i$-labeled interval just 
introduced to mimic the increment of the counter $i$. 
Let us consider now transitions of the form  $\texttt{if}\; i=0 \texttt{ then goto}(q') \texttt{ else dec}(i)
\texttt{ and goto}(q'')$. We first verify whether there are not $c_i$-labeled unit intervals in the 
current configuration by checking if the formula $\boxB \boxA \neg c_i$ 
holds over the interval that begins at the left endpoint of (the unit interval labeled with) $q$
and ends at the left endpoint of the next unit interval labeled with a state of $\cM$. If such a formula 
does not hold, we have to mimic the decreasing of the counter $c_i$ by one. To this end,
we introduce another special proposition letter $del$ and we force it to hold over one of the
$c_i$-labeled intervals of the current configuration. Intuitively, the interval marked by $del$ is
not transferred to the next configuration, thus simulating the execution of the decrement
on the counter $c_i$. In Figure \ref{fig:forwardzero}, we graphically depict the encoding of zero-test 
transitions.

\begin{wrapfigure}{r}{0.6\textwidth}
\vspace{-0.5cm}
\hspace{-0.3cm} 
\includegraphics{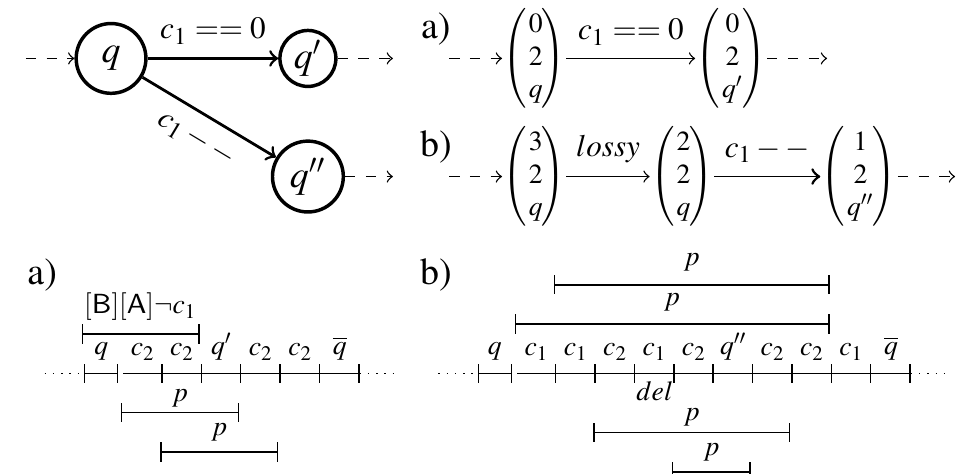}

\vspace{-0.1cm}
\caption{\hspace{-0.2cm} Encoding of a lossy computation  in $\AABB$: zero-test states.}\label{fig:forwardzero}
\vspace{-0.4cm}
\end{wrapfigure}

The next step is the correct transfer of all counter values from the current configuration to the next one
with the only exception of the 
$new/del$-labeled intervals (if any). What does ``correctly'' mean? According to 
the definition of the lossy semantics, a counter can be either transferred with its exact value or with a smaller one, that is, we only have to avoid unsupported increments of counter values, as for lower values we can always assume that a lossy transition has been fired (Figure \ref{fig:forwardinc} gives an example of such a behavior).
The transfer is done by means of a function that maps (the left endpoints of) $c_i$-labeled intervals of the current
configuration to (the left endpoints of) $c_i$-labeled intervals of the next one. Such a function is encoded by a proposition letter $p$.  
Notice that all the properties we dealt with so far, including those concerning $p$, can be expressed in $\ABB$. However, we still need to suitably constrain $p$ to guarantee that it behaves as expected. In particular, we must impose surjectivity to prevent unsupported increments of counter values from occuring. Such a property can be forced by the formula $\psi_{sur}= \boxG( \bigwedge_{i \in \{1, \ldots, k\}} c_i \wedge \neg new \rightarrow \diamondAbar  p)$, where modality $\diamondAbar$ plays an essential role.



Up to this point, the entire encoding has been done without exploiting any special feature of $\Sigma_{\spoiler}$ and $\Sigma_{\duplicator}$ brought by the finite synthesis context. The power of synthesis is needed to force injectivity,
thus turning lossy counter machines into standard ones.
Let us define the concrete instance of the synthesis problem we are interested in as the pair $(\varphi^{0-0}_{\cM}, \{p_{\spoiler}\})$, where $\varphi^{0-0}_{\cM} = \varphi_{\cM} \wedge \psi_{inj}$ and $\psi_{inj} = \boxG((p\wedge p_{\spoiler} \rightarrow \boxA (\neg \pi \rightarrow s)) \wedge (p \wedge \neg p_{\spoiler} \rightarrow \boxA( \neg \pi \rightarrow \neg s)))$ is the formula for injectivity.

Let us take a closer look at $\varphi^{0-0}_{\cM}$ to understand how it guarantees injectivity of the transfer function encoded by $p$. Let us assume $(\varphi^{0-0}_{\cM}, \{p_{\spoiler}\}) $ to be a positive instance of the finite synthesis problem. Then,  there exists a $\Sigma^T_{\spoiler}$-response strategy $S_{\diamond}$ such that 
for every $\rho_{\spoiler}$, a prefix 
$\rho[0\ldots n]$ of the response $\rho$ of $S_{\diamond}$ to  $\rho_{\spoiler}$ is a successful run, that is, ${\bM}_\rho[0\ldots n]$ satisfies $\varphi^{0-0}_{\cM}$. Assume by way of contradiction that for every successful run $\rho[0\ldots n]$ (for some natural number $n$), which is an $S_{\diamond}$ response to  $\rho_{\spoiler}$, the resulting model ${\bM}_\rho[0\ldots n] = (\{0, \ldots,n\}, \cV)$ is such that there exist three points $x < y < z \leq n$  with $p\in \cV(x,z)\cap \cV(y,z)$, thus violating injectivity of $p$.
We collect all these triples $(x,y,z)$ into a set $Defects(\rho)$. Without loss of generality, we can assume $n$ to be the minimum natural numbers such that ${\bM}_\rho[0\ldots n]$ satisfies $\varphi^{0-0}_{\cM}$. A  lexicographical order $\leq$ can be defined over the triples $(x,y,z)$ in $Defects(\rho)$, where $z$ is considered as the most significative component and $x$ as the least significant one. Let $(x,y,z)$ be the minimum element of $Defects(\rho)$ with respect to
$\leq$. First, we observe that $z$ cannot be equal to $n$, as, by construction, (i) $z$ is the left endpoint of a counter-labeled interval, (ii) any model for the formula is forced to end at the left endpoint of a state-labeled interval, and (iii) an interval cannot be both a state- and a counter-labeled interval (state- and counter-labeled intervals are mutually exclusive).
Hence, $z< n$.
Now, since ${\bM}_\rho[0\ldots n]$ satisfies $\psi_{inj}$, it immediately follows that either $p_{\spoiler} \in \cV(x,z)\cap \cV(y,z)$ or $p_{\spoiler} \not\in \cV(x,z)\cup \cV(y,z)$ (otherwise, both $s \in \cV(z,z+1)$ and $s \notin \cV(z,z+1)$). 
Without loss of generality, we assume that  $p_{\spoiler} \in \cV(x,z)\cap \cV(y,z)$ (the other case is completely symmetric). 
Let $i,j$ be the even indexes (player $\spoiler$ is playing at even positions) such that $\rho[i]=([x,z], \sigma_i)$ and $\rho[j]=([y,z], \sigma_j)$, respectively. Let $i<j$ (the opposite case is perfectly symmetric).
Since $\varphi$ must be satisfied by all $\rho_{\spoiler} \in \spruns(\varphi,\{p_{\spoiler}\})$, there exists a run $\rho'$ such that $\rho'[0\ldots j-1]=\rho[0\ldots j-1]$ (it is a run where spoiler behaves the same up to position $j-1$, and since
$S_{\diamond}$ is a function on the prefixes of runs, duplicator behaves the same as well) and $p_{\spoiler} \not\in \rho'[j]_{\sigma}$ (while $p_{\spoiler} \in \rho[j]_{\sigma}$).
By hypothesis, there exists $n'$ such that $\rho'[0\ldots n']$ is a model for $\varphi$. It clearly holds that $n'> z$; otherwise, minimality of $n$ would be violated, since $\rho'$ is equal to $\rho$ up to $z$. Now, from $p \in \rho'[i+1]_{\sigma}$ and $p_{\spoiler} \in \rho'[i]_{\sigma}$, it follows that $p \not\in \rho'[j+1]_{\sigma}$. 
%
Otherwise ($p \in \rho'[j+1]_{\sigma}$), a contradiction would occur when point $z+1$ is added, as $\psi_{inj}$
has forced player  $\diamond$ to put  both $s$  and $\neg s$ on the  interval $[z,z+1]$.
\begin{wrapfigure}{r}{0.3\textwidth}
\vspace{-0.5cm}
\hspace{-0.0cm} 
\includegraphics{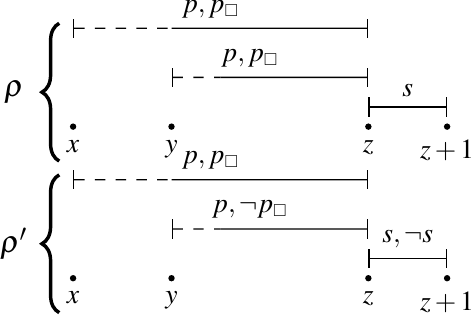}

\caption{\hspace{-0.2cm} Behavior of the formula $\psi_{inj}$.}\label{fig:forwardzero}
\vspace{-0.5cm}

\end{wrapfigure}
We can conclude that there exists a run $\rho'$ such that injectivity of $p$ 
is guaranteed up to the defect $(x,y,z)$, that is, the minimum element 
$(x',y',z')$ in $Defects(\rho')$ is greater than $(x,y,z)$ according to the above-defined lexicographical order.
Now, we can apply exactly the same argument we use for $\rho$ to $\rho'$, identifying a run $\rho''$
whose minimun defect $(x'',y'',z'')$ in $Defects(\rho'')$ is greater than $(x',y',z')$, and so on.
Let $\rho_{\omega}$ be the limit run. It holds that  
$\rho_{\omega}$ is still a response of $S_{\duplicator}$ to some $\rho_{\spoiler}$ in $\spruns(\varphi, \{p_{\spoiler}\})$ where $p$ is injective (contradiction).
 
\begin{theorem}\label{thm:aabbundecidabilitymachine}
Let $\cM$ be a counter machine. $\cM$ is a positive instance of  the $0$-$0$
reachability problem if and only if the $\AABB$-formula $\varphi^{0-0}_{\cM}$ is a 
positive instance of the finite synthesis problem.
\end{theorem}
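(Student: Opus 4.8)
The plan is to prove the two implications separately, on top of the construction recalled above, where a finite model of $\varphi_{\cM}$ encodes a \emph{lossy} $0$-$0$ computation of $\cM$ and the propositional letter $p$ carries the counter values from one configuration to the next. The whole argument rests on the observation that $\psi_{sur}$ already forces $p$ to be surjective, while $\psi_{inj}$ together with the interplay between $\spoiler$ and $\duplicator$ forces it to be injective; surjective plus injective means that $p$ is a bijection, i.e.\ an \emph{exact} transfer, so that the lossy semantics collapses onto the perfect one. Thus the synthesis instance is positive exactly when a perfect $0$-$0$ computation exists, and for the deterministic machine $\cM$ this is precisely the $0$-$0$ reachability property.

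For the (easier) direction, I would assume that $\cM$ is a positive instance of $0$-$0$ reachability. Since $\cM$ is deterministic, its unique computation from $q_0$ with all counters $0$ is finite and ends in $q_f$ with all counters $0$; being perfect, it is in particular lossy, hence encoded by a finite model of $\varphi_{\cM}$ in which $p$ is a bijection. I let $\duplicator$ commit once and for all to building this fixed model, placing $p$ along the bijection irrespective of $\spoiler$'s choices for $p_{\spoiler}$ (which she does not control). The only reactive decision is the auxiliary letter $s$: by condition~$3$ of Definition~\ref{def:admrun}, when the play reaches an interval starting at a point $z$, all intervals ending at $z$ have already been visited, so $\duplicator$ already sees the unique interval $[x,z]$ carrying $p$ into $z$ and the value $\spoiler$ put on $p_{\spoiler}$ there; she then makes $s$ hold on the intervals starting at $z$ iff $p_{\spoiler}$ holds on $[x,z]$. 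Because $p$ is injective there is at most one $p$-edge into each $z$, so no conflicting demand on $s$ ever arises and $\psi_{inj}$ holds, while every other conjunct of $\varphi_{\cM}$ holds by construction. Hence this response run is successful against every $\spoiler$ play, and the strategy is winning.

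For the converse I would assume that $(\varphi^{0-0}_{\cM},\{p_{\spoiler}\})$ admits a winning strategy $S_{\diamond}$ and extract a successful response in which $p$ is globally injective. My preferred route is to let $\spoiler$ play the injectivity-punishing strategy, analogous to the one used in Section~\ref{sec:synthdef} to bound the number of $corr$-labelled intervals per point: at each right endpoint $z$, $\spoiler$ chooses $p_{\spoiler}$ freely until $\duplicator$'s first reply carrying $p$, and thereafter flips $p_{\spoiler}$ on every remaining interval ending at $z$. Were $\duplicator$ to add a second $p$-edge into $z$, the two edges would disagree on $p_{\spoiler}$ and $\psi_{inj}$ would impose both $s$ and $\neg s$ on $[z,z+1]$, contradicting successfulness; since surjectivity ($\psi_{sur}$) forces at least one $p$-edge into every non-$new$ counter interval, exactly one survives, so $p$ is injective at $z$, hence everywhere. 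Being surjective and injective, $p$ is a bijection, so $\bM_{\rho}$ encodes a perfect $0$-$0$ computation of $\cM$; by determinism this is \emph{the} computation, which therefore reaches $q_f$ with all counters $0$, witnessing a positive reachability instance.

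I expect the converse to be the delicate part, and the main obstacle is making the extraction of an injective successful run fully rigorous. The punishing-strategy route requires checking that $\spoiler$ can indeed realise it within the move discipline (he labels first on each interval and is free to order the intervals sharing a right endpoint). A more robust alternative, and the one developed before the theorem statement, avoids analysing a concrete $\spoiler$ strategy and instead uses only the defining property of a winning strategy, namely that \emph{every} $\spoiler$ play is answered by a successful run: one orders the triples of $Defects(\rho)$ lexicographically, shows that flipping $p_{\spoiler}$ at the minimal defect forces $\duplicator$ to drop one of the two offending $p$-edges while strictly raising the minimal defect, and passes to the limit run $\rho_{\omega}$, which is still a response of $S_{\diamond}$ but carries an injective $p$. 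Turning this limit into a rigorous object — arguing that $\rho_{\omega}$ is a well-defined admissible run and that successfulness is preserved on the relevant finite prefix — is where I would concentrate the technical effort.
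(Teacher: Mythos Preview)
Your proposal is correct on both directions. For the forward implication you supply the explicit $\duplicator$ strategy that the paper leaves implicit; the reasoning about $s$ via the visitation order is sound, with the small caveat that ``all intervals ending at $z$ have already been visited'' need only hold when the current interval is a \emph{non-point} interval starting at $z$, which is precisely what the $\neg\pi$ guard in $\psi_{inj}$ demands.

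For the converse your preferred route---a concrete punishing strategy for $\spoiler$ that flips $p_{\spoiler}$ after $\duplicator$'s first $p$-reply at each right endpoint---differs from the paper's. The paper does not fix a $\spoiler$ play at all: it assumes, for contradiction, that every successful response of $S_\diamond$ carries a non-injectivity defect, takes the lexicographically least defect $(x,y,z)$, flips $p_{\spoiler}$ on the later of the two offending intervals to force $\duplicator$ to drop one $p$-edge, and iterates to a limit run $\rho_\omega$ that is still a response of $S_\diamond$ yet has injective $p$. Your approach is more elementary and sidesteps the limit construction entirely, producing in one shot a single $\rho_{\spoiler}$ against which any winning $S_\diamond$ must yield an injective $p$; the price is checking that the flip discipline is compatible with the move order of Definition~\ref{def:admrun} (it is, since $\spoiler$ labels first and may enumerate the intervals sharing a right endpoint in any order). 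The paper's route, which you correctly identify and summarise as the ``robust alternative'', trades that check for the burden of making the limit $\rho_\omega$ rigorous. Either argument closes the proof.
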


\begin{corollary}\label{thm:aabbundecidability}
The finite synthesis problem for  $\AABB$ is undecidable.
\end{corollary}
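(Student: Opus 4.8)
The plan is to read the Corollary as the conclusion of a many-one reduction from the $0$-$0$ reachability problem for counter machines to the finite synthesis problem for $\AABB$. The reduction is the map $\cM \mapsto (\varphi^{0-0}_{\cM},\{p_{\spoiler}\})$, where $\varphi^{0-0}_{\cM}=\varphi_{\cM}\wedge\psi_{inj}$ is assembled from the (effectively constructible) formula $\varphi_{\cM}$ of \cite{abba_finite} and the fixed formula $\psi_{inj}$. First I would check that this map is computable: $\varphi_{\cM}$ is produced uniformly from the finite description of $\cM$, the formula $\psi_{inj}$ does not depend on $\cM$, and the spoiler alphabet is the single letter $p_{\spoiler}$. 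Given Theorem \ref{thm:aabbundecidabilitymachine}, which I may assume and which states that $\cM$ is a positive instance of $0$-$0$ reachability if and only if $(\varphi^{0-0}_{\cM},\{p_{\spoiler}\})$ is a positive instance of finite synthesis, this map is a correct reduction; since $0$-$0$ reachability is undecidable by Theorem \ref{thm:minsky}, any decision procedure for finite synthesis for $\AABB$ would decide $0$-$0$ reachability, a contradiction. Hence the finite synthesis problem for $\AABB$ is undecidable.

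The substance lies in Theorem \ref{thm:aabbundecidabilitymachine} itself, whose two directions I would organise as follows. For the direction from reachability to synthesis I would fix a \emph{perfect} $0$-$0$ computation of $\cM$ and let $\duplicator$ play so as to encode it interval by interval, in the unit-interval format of Figure \ref{fig:forwardinc}, choosing the transfer relation $p$ to be an injective (indeed bijective) matching of equal counter values between consecutive configurations. Because $p$ is injective, for every point $z$ there is at most one interval ending at $z$ that carries $p$, so $\psi_{inj}$ constrains only the single value of $s$ on intervals starting at $z$; since by the ordering of admissible runs that $p$-interval is labelled before $[z,z+1]$, $\duplicator$ has already seen $\spoiler$'s value of $p_{\spoiler}$ and can set $s$ consistently, whatever $\spoiler$ does. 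This yields a successful finite run, i.e.\ a winning strategy. For the converse I would show that a winning strategy must encode a lossy computation (this is exactly the content of $\varphi_{\cM}$ from \cite{abba_finite}) in which $p$ is in addition injective, so that, together with the surjectivity enforced by $\psi_{sur}$, the transfer is bijective and the lossy computation is in fact perfect, witnessing $0$-$0$ reachability.

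The hard part is precisely this converse, namely showing that the synthesis game \emph{forces} $p$ to be injective. The obstacle is that $\psi_{inj}$ read as a plain satisfiability constraint is vacuous, and injectivity emerges only from the interplay in which $\spoiler$ controls $p_{\spoiler}$ and moves first at every point. The plan here is the defect/limit argument sketched before the statement: order the triples $(x,y,z)$ witnessing a failure of injectivity lexicographically with $z$ most significant; assuming (for contradiction) that every successful response of the winning $S_{\diamond}$ still contains a defect, take the minimal one $(x,y,z)$ and invoke the $\spoiler$ strategy that, once $\duplicator$ first commits $p$ on some interval ending at $z$, plays the opposite value of $p_{\spoiler}$ on every not-yet-labelled interval ending at $z$. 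This forces $\duplicator$, on pain of having to assert both $s$ and $\neg s$ on $[z,z+1]$, to abandon that defect, producing a response $\rho'$ whose minimal defect is strictly larger in the order; iterating and passing to the limit run $\rho_{\omega}$ yields a response of $S_{\diamond}$ with no defect, contradicting the assumption. I expect the delicate points to be (i) verifying that $\rho_{\omega}$ is still a genuine response of the \emph{same} strategy $S_{\diamond}$ and that it has a successful (hence defect-free) finite prefix, and (ii) checking that injectivity together with the surjectivity of $\psi_{sur}$ indeed excludes unsupported counter increments, so that the bijective transfer certifies a perfect, not merely lossy, computation.
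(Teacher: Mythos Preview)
Your proposal is correct and matches the paper's approach: the corollary is stated without a separate proof and follows immediately from Theorem~\ref{thm:aabbundecidabilitymachine} together with Theorem~\ref{thm:minsky}, exactly the reduction you spell out. Your elaboration of Theorem~\ref{thm:aabbundecidabilitymachine} (the defect ordering, $\spoiler$'s strategy of flipping $p_{\spoiler}$ after $\duplicator$'s first commitment of $p$ at a given right endpoint, and the limit-run argument) also tracks the paper's own discussion preceding the theorem.
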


As we already pointed out, the modality $\diamondAbar$ comes into play in the specification of
surjectivity only. Hence, if we drop the formula $\varphi_{sur}$, we can not force surjectivity 
anynore, but we can still impose injectivity (by exploiting the power of synthesis) in the smaller 
fragment $\ABB$. Then, by making use (with minor modifications) of a previous result of ours \cite{montanari2013adding}, we can reduce the reachability problem for lossy counter machines 
to the satisfiability problem for  $\ABBsim$ over finite linear orders. The main difference from the 
previous reduction is that computations are encoded backwards, that is, the encoding starts from
the final configuration and  (following the time line) it reaches the initial one.

\begin{wrapfigure}{l}{0.6\textwidth}
\vspace{-0.4cm}
\hspace{-0.3cm} 
\includegraphics{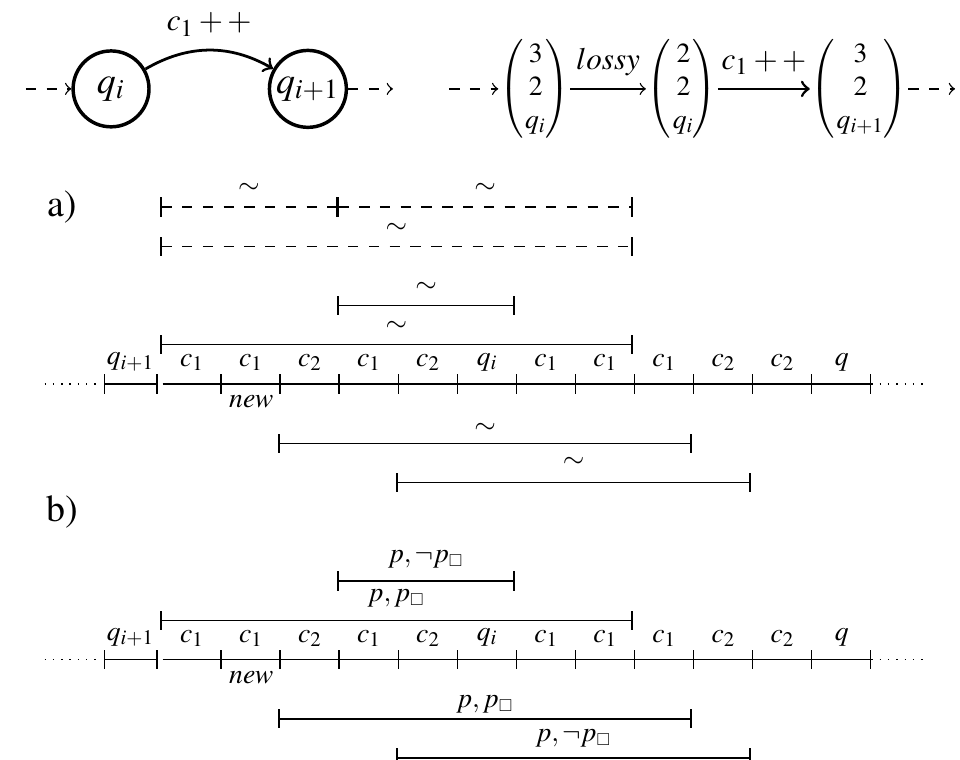}
\caption{\hspace{-0.3cm} Encoding of a lossy computation in $\ABBsim$ satisfiability (a) and
$\ABB$ synthesis (b): incrementing states.}\label{fig:backwardinc}
\vspace{-0.3cm}
\end{wrapfigure}

It can be easily checked that, in the synthesis setting, in order to express the lossy behavior, the transfer function 
must be injective. The absence of modality $\diamondAbar$ makes it impossible to provide an upper bound to
the value of a counter in the next configuration along the time line, which is the previous configuration in the computation of the counter machine. Let us consider Figure \ref{fig:backwardinc} (a). The second $c_i$-labeled interval of the configuration starting at $q_i$ does not begin any $\sim$-interval. However, since computations are encoded backwards, 
such a situation can be simulated by introducing a lossy transition. Injectivity can be forced by exploiting the equivalence relation (proposition letter $\sim$).
The formula $\varphi^{\sim}_{inj}=\boxG((\sim \wedge \neg \pi) \rightarrow \bigvee_{q \in Q} \diamondB\diamondA q )$ 
states that any $\sim$-interval, which is not a point-interval, must cross at least one state-interval. It immediately follows that points belonging to the same configuration must belong to different equivalence classes.  Suppose that the endpoints of a $\sim$-interval belong to the same $\sim$ class, against $\varphi^{\sim}_{inj}$. Transitivity of $\sim$ can then be exploited to violate injectivity, as shown by the dashed intervals in Figure \ref{fig:backwardinc} (a).
Injectivity can be forced in $\ABB$, which does not include the proposition letter $\sim$, by using the expressive power of synthesis. Thanks to formula $\psi_{inj}$, we can indeed mimic the behavior of $\sim$ by the combined behavior of 
$p$  and $p_{\spoiler}$ as shown in Figure  \ref{fig:backwardinc} (b). 
As an immediate consequence,  the non-primitive recursive hardness of the satisfiability problem for $\ABBsim$ over finite linear orders \cite{montanari2013adding} can be directly transferred to the finite synthesis problem for $\ABB$. 

\begin{theorem}\label{thm:aabhard}
The finite synthesis problem for $\ABB$ is Non-Primitive Recursive hard.
\end{theorem}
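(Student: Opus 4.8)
The plan is to reduce the $0$-$0$ reachability problem for lossy counter machines, which is Non-Primitive Recursive-hard by Theorem~\ref{thm:lossy}, to the finite synthesis problem for $\ABB$. The starting point is the $\ABBsim$ satisfiability reduction of~\cite{montanari2013adding}, in which, for a counter machine $\cM$, one builds a formula that is satisfiable over finite linear orders if and only if $\cM$ admits a $0$-$0$ lossy computation. There the computation is encoded \emph{backwards} along the time line (from $q_{\mathsf{halt}}$ to $q_{\mathsf{init}}$), and the transfer function $p$ between consecutive configurations is forced to be injective by the equivalence relation through $\varphi^{\sim}_{inj}$, as illustrated in Figure~\ref{fig:backwardinc}(a). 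Since $\diamondAbar$ is unavailable in $\ABB$, one can no longer bound from above the counter value of the configuration lying \emph{next} along the time line (the \emph{previous} one in the actual computation), so no surjectivity constraint can be imposed; this is exactly what keeps the encoded computations lossy rather than perfect.

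The crux is to replace $\sim$ by a purely $\ABB$ mechanism that still forces $p$ to be injective. First I would observe that every component of the reduction of~\cite{montanari2013adding}, apart from the occurrences of $\sim$ in $\varphi^{\sim}_{inj}$, is already expressible in $\ABB$, so the only genuinely missing ingredient is the injectivity of $p$. To recover it, I reuse the synthesis gadget from the proof of Theorem~\ref{thm:aabbundecidabilitymachine}: I add proposition letters $p_{\spoiler}$ (under the control of $\spoiler$) and $s$, set $\Sigma^{T}_{\spoiler}=\{p_{\spoiler}\}$, and impose
\[
\psi_{inj}=\boxG\bigl((p\wedge p_{\spoiler}\rightarrow\boxA(\neg\pi\rightarrow s))\wedge(p\wedge\neg p_{\spoiler}\rightarrow\boxA(\neg\pi\rightarrow\neg s))\bigr).
\]
As in the $\AABB$ case, $\psi_{inj}$ lets $\spoiler$ punish any point that receives two incoming $p$-edges: if $\duplicator$ ever made $p$ non-injective, $\spoiler$ could choose opposite values of $p_{\spoiler}$ on the two colliding intervals and force both $s$ and $\neg s$ on the point interval at their common endpoint, so no successful run could be completed. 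This is precisely the behaviour depicted in Figure~\ref{fig:backwardinc}(b), where the combined labelling of $p$ and $p_{\spoiler}$ mimics the role played by $\sim$ in part~(a).

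Calling $\varphi$ the resulting $\ABB$ formula, I would then prove that $(\varphi,\{p_{\spoiler}\})$ is a positive instance of the finite synthesis problem if and only if $\cM$ has a $0$-$0$ lossy computation. The ($\Leftarrow$) direction is routine: a lossy computation yields a canonical backward encoding in which $p$ is injective, and $\duplicator$ wins simply by playing the corresponding labelling of $p$ and $s$ regardless of $\spoiler$'s choices of $p_{\spoiler}$. For the ($\Rightarrow$) direction I would replay the defect-elimination argument already used for Theorem~\ref{thm:aabbundecidabilitymachine}: assuming toward a contradiction that every successful response of a winning strategy contains a collision $(x,y,z)$ with $p\in\cV(x,z)\cap\cV(y,z)$, take the lexicographically minimal such defect, observe that $\psi_{inj}$ forces $\spoiler$'s labels on $[x,z]$ and $[y,z]$ to agree, and let $\spoiler$ flip $p_{\spoiler}$ on the later interval; minimality of $n$ together with $\psi_{inj}$ then pushes the minimal defect strictly upward. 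Passing to the limit run yields a response of the same winning strategy in which $p$ is genuinely injective, hence a faithful lossy computation, contradicting the assumption.

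The main obstacle I expect is not the synthesis gadget itself---since it is inherited essentially verbatim from Theorem~\ref{thm:aabbundecidabilitymachine}---but the bookkeeping required to certify that the backward $\ABB$ encoding, \emph{with injectivity but without surjectivity}, captures exactly the lossy (and not the perfect) semantics: one must check that dropping $\psi_{sur}$ loosens the transfer relation in precisely the direction that corresponds to allowing counter values to be lost, and that the minor modifications of~\cite{montanari2013adding} needed to interface its backward encoding with the $p/p_{\spoiler}/s$ gadget do not accidentally reintroduce an upper bound on counter growth. Once this is settled, the Non-Primitive Recursive hardness follows immediately from Theorem~\ref{thm:lossy}.
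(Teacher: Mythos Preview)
Your proposal is correct and follows essentially the same route as the paper: start from the backward $\ABBsim$ encoding of lossy computations in~\cite{montanari2013adding}, observe that $\sim$ is used only to enforce injectivity of the transfer relation, and replace that role by the synthesis gadget $\psi_{inj}$ with $\Sigma^{T}_{\spoiler}=\{p_{\spoiler}\}$, reusing the defect-elimination argument of Theorem~\ref{thm:aabbundecidabilitymachine} for the correctness direction. One cosmetic remark: there is no $\psi_{sur}$ to ``drop'' in the backward encoding---that formula belongs to the forward $\AABB$ reduction---so your anticipated obstacle is really just the observation (already made in~\cite{montanari2013adding}) that injectivity of $p$ in the backward direction is exactly what yields the lossy, rather than gainy, semantics.
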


Similarly, the undecidability of the $\bbN$-synthesis problem for  $\ABB$ can be derived from the undecidability of the satisfiability problem of  $\ABBsim$ over linear orders isomorphic to $\bbN$ \cite{montanari2013adding}. 

\begin{theorem}\label{thm:aabundecidability}
The $\bbN$-synthesis problem for  $\ABB$ is undecidable.
\end{theorem}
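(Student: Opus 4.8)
The plan is to mirror the reduction that established Theorem \ref{thm:aabhard}, but now starting from the \emph{undecidable} satisfiability problem for $\ABBsim$ over $\bbN$ rather than from its non-primitive-recursive-hard version over finite linear orders. By the result of \cite{montanari2013adding} quoted in the excerpt, satisfiability for $\ABBsim$ over linear orders isomorphic to $\bbN$ is undecidable; concretely, there is a formula $\varphi^{\sim}_{\cM}$ (the backward encoding, augmented with $\varphi^{\sim}_{inj}$) that is satisfiable over $\bbN$ if and only if the counter machine $\cM$ has a \emph{perfect} $0$-$0$ computation. First I would take this family of formulas as a black box, and for each $\cM$ produce an $\ABB$ instance of the $\bbN$-synthesis problem.

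The reduction itself reuses the device already introduced for the finite case: I would replace the proposition letter $\sim$, which is not available in $\ABB$, by the combined behaviour of the transfer-function letter $p$ and the spoiler-controlled letter $p_{\spoiler}$, exactly as depicted in Figure \ref{fig:backwardinc}~(b) and governed by the injectivity formula $\psi_{inj}=\boxG((p\wedge p_{\spoiler} \rightarrow \boxA (\neg \pi \rightarrow s)) \wedge (p \wedge \neg p_{\spoiler} \rightarrow \boxA( \neg \pi \rightarrow \neg s)))$. Let $\varphi_{\cM}$ be the $\ABB$ formula obtained from $\varphi^{\sim}_{\cM}$ by this syntactic substitution, and consider the $\bbN$-synthesis instance $(\varphi_{\cM}, \{p_{\spoiler}\})$. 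The claim to establish is that this pair is a positive instance of the $\bbN$-synthesis problem if and only if $\cM$ admits a perfect $0$-$0$ computation, which by Theorem \ref{thm:minsky} is undecidable.

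The argument for the equivalence splits, as before, into the two directions dictated by the interplay between $\spoiler$ and $\duplicator$. For the direction that matters for expressing the property, I would lift the limit-run argument used in the finite-case proof of Theorem \ref{thm:aabbundecidabilitymachine}: whenever $\duplicator$ would label $p$ non-injectively, $\spoiler$ has a response (flipping $p_{\spoiler}$ on a second interval ending at the same point) forcing $\duplicator$ to place both $s$ and $\neg s$ on the point-interval $[z,z+1]$, which is impossible; iterating this defect-elimination over the lexicographic order on defect triples $(x,y,z)$ yields, in the limit, a response in which $p$ is genuinely injective. Because computations are encoded backwards and (over $\bbN$) the absence of $\diamondAbar$ prevents bounding counter values forward along the time line, injectivity of $p$ is precisely what rules out the spurious increments that lossy transitions would otherwise permit, so a winning $\duplicator$-strategy over $\bbN$ corresponds exactly to a model of $\varphi^{\sim}_{\cM}$ in which $\sim$ separates points of the same configuration, i.e.\ a perfect computation. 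The converse embeds any perfect $0$-$0$ computation of $\cM$ as the induced structure of a successful admissible run.

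The main obstacle I anticipate is the adaptation from the \emph{finite} synthesis winning condition to the \emph{$\bbN$-synthesis} condition. In the finite case $\duplicator$ wins as soon as a finite prefix $\rho[0\ldots n]$ is a successful run, and the minimality of $n$ together with the observation that the fatal point $z+1$ eventually enters the play drives the defect-elimination argument; over $\bbN$ there is no such finite cutoff, and the winning condition requires the \emph{entire} infinite run to be successful. I would therefore have to argue that the limit run $\rho_{\omega}$ remains a legitimate response of a single $\Sigma^T_{\spoiler}$-response strategy $S_{\diamond}$ and that injectivity, now a property quantified over all of $\bbI(\bbN)$, is preserved in the limit — essentially a König-style / monotone-convergence argument showing that eliminating the lexicographically least defect infinitely often leaves no defect at any fixed triple. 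Once this limit construction is shown to respect the $\bbN$-synthesis winning condition, the reduction goes through and the undecidability of $0$-$0$ reachability transfers, giving the stated result.
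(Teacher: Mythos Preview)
Your proposal is correct and follows the same route the paper takes: the paper's proof is literally the single sentence ``Similarly, the undecidability of the $\bbN$-synthesis problem for $\ABB$ can be derived from the undecidability of the satisfiability problem of $\ABBsim$ over linear orders isomorphic to $\bbN$ \cite{montanari2013adding},'' invoking the same substitution of $\sim$ by the pair $(p,p_{\spoiler})$ and the formula $\psi_{inj}$ used for Theorem~\ref{thm:aabhard}. Your outline simply spells out what that transfer amounts to, and the obstacle you flag (carrying the defect-elimination limit argument from finite prefixes to genuinely infinite runs) is a real technical point the paper leaves implicit; your sketched resolution via pointwise stabilisation of the run sequence is the right way to close it.
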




\section{Decidability of $\ABBsim$ over finite linear orders}

We conclude the paper by showing that the synthesis problem for $\ABBsim$ over finite linear orders is decidable. To this end, we introduce some basic terminology, notations, and definitions.

Let $\bM=(\bbD, \cV)$ be an interval structure.
We associate with each interval $I\in\bbI_\bbD$ its \emph{$\varphi$-type} 
$\type_\bM^\varphi(I)$, defined as the set of all formulas $\psi\in\eclosure(\varphi)$ 
such that $\bM,I\sat\psi$ (when no confusion arises, we omit the parameters $\bM$ and 
$\varphi$). 
A particular role will be played by those types $F$ that contains the sub-formula
$\boxB\false$, which are necessarily associated with singleton intervals.
When no interval structure is given, we can still try to capture the concept of type 
by means of a maximal ``locally consistent'' subset of $\eclosure(\varphi)$.
Formally, we call \emph{$\varphi$-atom} any set $F\subseteq\eclosure(\varphi)$ such 
that (i) $\psi\in F$ iff $\neg\psi\nin F$, for all $\psi\in\eclosure(\varphi)$, (ii) $\psi\in F$ 
iff $\psi_1 \in F$ or $\psi_2 \in F$, for all $\psi=\psi_1 \vel \psi_2 \in\eclosure(\varphi)$, 
(iii) if $\boxB\false\in F$ and $\psi\in F$, then $\diamondA\psi\in F$, 
for all $\psi\in\closure(\varphi)$, (iv) if $\boxB\false\in F$ and $\diamondA\psi\in F$, 
then $\psi\in F$ or $\diamondBbar\psi\in F$, for all $\psi\in\closure(\varphi)$, and
(v) if $\boxB\false\in F$, then $\sim \in F$.
We call \emph{$\pi$-atoms} those atoms that contain the formula $\boxB\false$,
which are thus candidate types for singleton intervals.
We denote by $\atoms(\varphi)$ the set of all $\varphi$-atoms. 

Given an atom $F$ and a relation $R\in\{A,B,\bar{B}\}$, 
we let $\req_R(F)$ be the set of \emph{requests of $F$ along direction $R$}, 
namely, the formulas $\psi\in\closure(\varphi)$ such that $\langle \mathsf{R} \rangle \psi \in F$. 
Similarly, we let $\obs(F)$ be the set of \emph{observables of $F$}, namely, the 
formulas $\psi\in F\cap\closure(\varphi)$ -- intuitively, the observables of $F$ 
are those formulas $\psi\in F$ that fulfill requests of the form $\langle \mathsf{R} \rangle \psi$ 
from other atoms. Note that, for all $\pi$-atoms $F$, we have 
$\req_A(F) = \obs(F) \cup \req_{\bar{B}}(F)$.

It is well known that formulas of interval temporal logics can be equivalently interpreted over 
the so-called compass structures \cite{JLOGC::Venema1991}. These structures can be 
seen as two-dimensional spaces in which points are labelled with complete 
logical types (atoms). Such an alternative interpretation exploits the 
existence of a natural bijection between the intervals $I=[x,y]$ over a 
temporal domain $\bbD$ and the points $p=(x,y)$ in the $\bbD\times\bbD$ 
grid such that $x\le y$. It is useful to introduce a \emph{dummy atom}
$\emptyset$, distinct from all other atoms, and to assume that it labels 
all and only the points $(x,y)$ such that $x>y$, which do not 
correspond to intervals. Conventionally, we assume $\obs(\emptyset)=\emptyset$ 
and $\req_R(\emptyset)=\emptyset$, for $R\in\{A,B,\bar{B}\}$.

Formally, a \emph{compass $\varphi$-structure over a linear order $\bbD$} 
is a labeled grid $\cG=(\bbD\times\bbD,\tau)$, where the function 
$\tau:\bbD\times\bbD\then\atoms(\varphi)\uplus\{\emptyset\}$ maps 
any point $(x,y)$ to either a $\varphi$-atom (if $x\le y$) or the 
dummy atom $\emptyset$ (if $x>y$). 
Allen's relations over intervals have analogue relations over points $A,B,\bar{B}$ (by a
slight abuse of notation, we use the same letters 
for the corresponding relations over the points of a compass structure).
Thanks to such an interpretation, any interval structure $\bM$ can be converted 
to a compass one $\cG=(\bbD\times\bbD,\tau)$ by simply letting
$\tau(x,y)=\type([x,y])$ for all $x\le y\in\bbD$.
The converse, however, is not true in general, as the atoms associated with points 
in a compass structure may be inconsistent with respect to the underlying geometrical 
interpretation of Allen's relations. 
To ease a correspondence between interval and 
compass structures, we enforce suitable \emph{consistency conditions} on compass 
structures. First, we constrain each compass structures $\cG=(\bbD\times\bbD,\tau)$
to satisfy the following conditions on the 
special proposition letter $\sim$ (for the sake of readability 
we write $x\sim y$ in place of $\sim \in \tau(x,y)$):
(i)  for all $x\in D$, $x \sim x$;
(ii) for all $x<y<z$ in $D$,
     $x\sim y \wedge y \sim z \rightarrow x\sim z$, 
     $x\sim z \wedge y \sim z \rightarrow x\sim y$,
     and $x\sim z \wedge x \sim y \rightarrow y\sim z$.
Second, to guarantee the consistency of atoms associated with points, we introduce two  binary relations over them. 
Let $F$ and $G$ be two atoms: 
$$
\begin{array}{llll}
  F \depBbar G ~\text{iff}
  & \qquad\,
    F \depA G ~\text{iff} \\[0.5ex]
    \left\{
      \begin{array}{rcl}
        \req_{\bar{B}}(F) &\supseteq& \obs(G) \cup \req_{\bar{B}}(G) \\
        \req_B(G)         &\supseteq& \obs(F) \cup \req_B(F) \\
      \end{array}
    \right.
  & \qquad
    \begin{array}{l}
    \left\{
      \begin{array}{rcl}
        \req_A(F)         &=& \obs(G) \cup \req_B(G) \cup \req_{\bar{B}}(G)\!\!\!\! \\
      \end{array}
    \right.
    \\[2ex]
    \ 
    \end{array}
\end{array}
\vspace{-0.4cm}
$$

Note that the relation $\depBbar$ is transitive, while $\depA$ only satisfies
$\depA\circ\depBbar \;\subseteq\; \depA$. 
Observe also that, for all interval structures $\bM$ and all intervals $I,J$ in it,
if $I \mathrel{\bar{B}} J$ (resp., $I \mathrel{A} J$), then $\type_\bM(I) \depBbar \type_\bM(J)$ 
(resp., $\type_\bM(I) \depA \type_\bM(J)$).
Hereafter, we tacitly assume that every compass structure $\cG=(\bbD\times\bbD,\tau)$ 
satisfies analogous consistency properties with respect to its atoms, namely, 
for all points $p=(x,y)$ and $q=(x',y')$ in $\bbD\times\bbD$, 
with $x\le y$ and $x'\le y'$, if $p \mathrel{\bar{B}} q$ (resp., $p \mathrel{A} q$), 
then $\tau(p) \depBbar \tau(q)$ (resp., $\tau(p) \depA \tau(q)$).
In addition, we say that a request $\psi\in\req_R(\tau(p))$ of a point $p$ 
in a compass structure $\cG=(\bbD\times\bbD,\tau)$ is \emph{fulfilled} if there 
is another point $q$ such that $p \mathrel{R} q$ and $\psi\in\obs(\tau(q))$
-- in this case, we say that $q$ is a \emph{witness of fulfilment of $\psi$ from $p$}.
The compass structure $\cG$ is said to be \emph{globally fulfilling} if all requests 
of all its points are fulfilled.

We can now recall the standard correspondence between interval and compass 
structures (the proof is based on a simple induction on sub-formulas):

\begin{proposition}[\cite{montanari2013adding}]\label{prop:compass-structure}
Let $\varphi$ be an $\ABBsim$ formula. For every globally fulfilling compass 
structure $\cG=(\bbD\times\bbD,\tau)$, there is an interval structure 
$\bM=(\bbD,\cV)$ such that, for all
$x\le y\in\bbD$ and all $\psi\in\eclosure(\varphi)$, $\bM,[x,y]\sat\psi$ iff 
$\psi\in\tau(x,y)$.
\end{proposition}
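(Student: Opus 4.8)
The plan is to read the interval structure $\bM=(\bbD,\cV)$ off the labeling $\tau$ of the given globally fulfilling compass structure $\cG$, and then to prove the biconditional by structural induction on $\psi\in\eclosure(\varphi)$. First I would set $\cV([x,y])=\tau(x,y)\cap\Sigma$ for every $x\le y$, so that ordinary proposition letters are inherited directly from the atoms. The only delicate part of the construction concerns the special letter $\sim$: I would declare, for $x\le y$, the points $x$ and $y$ equivalent exactly when $\sim\in\tau(x,y)$, and extend this relation symmetrically to all pairs. One then checks that the reflexivity and transitivity conditions that $\cG$ imposes on $\sim$, together with the atom requirement that $\sim\in F$ whenever $\boxB\false\in F$, make this relation a genuine equivalence relation on $\bbD$, so that it is a legitimate interpretation of $\sim$ in the sense of the semantics.

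With $\bM$ fixed, I would run the induction. The base case for proposition letters (including $\sim$, handled by the previous paragraph) is immediate from the definition of $\cV$, and the Boolean cases follow verbatim from the atom conditions enforcing maximality and consistency under negation and disjunction. The substance of the argument is the modal cases $\langle R\rangle\psi$ with $R\in\{A,B,\bar{B}\}$, which split into two directions. For the direction from $\tau$ to $\bM$, I would invoke global fulfilment: if $\langle R\rangle\psi\in\tau(x,y)$ then $\psi\in\req_R(\tau(x,y))$, so there is a witness point $q$ with $(x,y)\mathrel{R}q$ and $\psi\in\obs(\tau(q))\subseteq\tau(q)$; since $\obs(\emptyset)=\emptyset$, the witness is a genuine interval, and the induction hypothesis gives $\bM,q\sat\psi$, hence $\bM,[x,y]\sat\langle R\rangle\psi$.

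For the converse direction I would use the atom-consistency relations $\depBbar$ and $\depA$. For example, if $\bM,[x,y]\sat\diamondBbar\psi$ there is $y'>y$ with $\bM,[x,y']\sat\psi$, so $\psi\in\tau(x,y')$ by the induction hypothesis; since $(x,y)\mathrel{\bar{B}}(x,y')$ forces $\tau(x,y)\depBbar\tau(x,y')$, the inclusion $\req_{\bar{B}}(\tau(x,y))\supseteq\obs(\tau(x,y'))\cup\req_{\bar{B}}(\tau(x,y'))$ yields $\psi\in\req_{\bar{B}}(\tau(x,y))$, i.e.\ $\diamondBbar\psi\in\tau(x,y)$. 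The case of $\diamondB$ is symmetric on the same vertical fibre $\{x\}\times\bbD$ but with the orientation of $\depBbar$ reversed, and the case of $\diamondA$ uses instead the equality defining $\depA$ and the fact that every $A$-successor of $(x,y)$ starts at the point $y$.

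The step I expect to be most delicate is precisely this converse direction of the modal cases, where the purely combinatorial conditions on atoms must be shown to mirror faithfully the geometry of Allen's relations. In particular one has to align the orientation of each dependency arrow with the corresponding modality and keep track of the asymmetry between $B$ and $\bar{B}$, which live on the same column of the compass but point in opposite directions; once the bookkeeping is set up correctly, every case reduces to unfolding the definitions of $\req$, $\obs$, and the two consistency relations.
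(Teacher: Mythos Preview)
Your proposal is correct and matches the paper's approach: the paper does not spell out the argument but simply cites \cite{montanari2013adding} and remarks that ``the proof is based on a simple induction on sub-formulas,'' which is exactly the structural induction you describe, with the valuation read off $\tau$, the equivalence relation recovered from the $\sim$-constraints imposed on compass structures, and the modal cases handled via global fulfilment in one direction and the consistency relations $\depBbar$, $\depA$ in the other.
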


In view of Proposition \ref{prop:compass-structure}, the satisfiability 
problem for an $\ABB\sim$ formula $\varphi$ reduces to the problem of 
deciding the existence of a compass $\tilde{\varphi}$-structure $\cG=(\bbD\times\bbD,\tau)$, 
with $\tilde{\varphi}=\diamondG\varphi$ ($\diamondG\varphi$ is a shorthand for
$\neg \boxG \neg \varphi$), that features the observable $\tilde{\varphi}$ 
at every point, that is, $\tilde{\varphi}\in\obs(\tau(x,y))$ for all $x\le y\in\bbD$.

It can be easily checked that, given an  $\ABB\sim$ formula $\varphi$, for every set 
$\sigma \subseteq \Sigma^T$, there exists at most one atom $F\in \atoms(\varphi)$
such that $F \cap \Sigma^T=\sigma$. Hence, for all $\ABB\sim$ formulas $\varphi$,
 we can define a (unique) partial function $f_{\varphi}: \cP(\Sigma^T) \rightarrow \atoms(\varphi)$ 
 such that for every set $\sigma \subseteq \Sigma^T$, $f_\varphi(\sigma)=F$ with $F 
 \cap \Sigma^T=\sigma$. 
By making use of the function $f_{\varphi}$, given an interval structure $\bM=(\bbD,\cV)$ for $\varphi$, 
we can define a corresponding compass structure $\cG_{\bM}= (\bbD\times\bbD, \tau)$ such that, for all 
$[x,y] \in \bbI(\bbD)$, $\tau(x,y)=f_{\varphi}(\cV([x,y]) \cup \{ \psi \in \Sigma^T\cap \eclosure(\varphi)\})$. 

\begin{lemma}\label{lem:closurefunction}
For every $\ABBsim$ formula $\varphi$ and 
interval structure $\bM$ for it, $\cG_{\bM}$ is a fulfilling compass structure for $\varphi$.
\end{lemma}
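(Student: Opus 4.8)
The plan is to reduce the lemma to a single identity: that the labeling of $\cG_\bM$ agrees pointwise with the genuine type labeling induced by $\bM$, namely $\tau(x,y)=\type_\bM([x,y])$ for all $x\le y\in\bbD$ (and $\tau(x,y)=\emptyset$ for $x>y$, where nothing needs checking). Once this is established, the two properties defining a fulfilling compass $\varphi$-structure --- the consistency conditions and global fulfillment --- follow directly from facts already recorded for type labelings, and the dummy-atom region plays no role since every witness we produce lives in the valid region $x\le y$.

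So the heart of the proof is the identity $\tau(x,y)=\type_\bM([x,y])$. First I would verify that $\type_\bM([x,y])$ is a genuine $\varphi$-atom, i.e.\ that it meets conditions (i)--(v). Conditions (i) and (ii) are immediate from the semantics of $\neg$ and $\vel$. The three conditions guarded by $\boxB\false$ concern singletons: if $\boxB\false\in\type_\bM([x,y])$ then $x=y$, and (v) holds because $\bM,[x,x]\sat\,\sim$ by reflexivity of $\sim$; (iii) holds because $[x,x]\mathrel{A}[x,x]$, so if $\psi$ holds at $[x,x]$ then $[x,x]$ itself witnesses $\diamondA\psi$; and (iv) holds because every $A$-successor of $[x,x]$ is an interval $[x,y']$ with $y'\ge x$, which either equals $[x,x]$ (yielding $\psi\in\type_\bM([x,x])$) or satisfies $[x,x]\mathrel{\bar B}[x,y']$ (yielding $\diamondBbar\psi\in\type_\bM([x,x])$). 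Having shown $\type_\bM([x,y])$ is an atom, I note that its restriction to $\Sigma^T$ is exactly the set $\cV([x,y])\cup\{\psi\in\Sigma^T\cap\eclosure(\varphi): \bM,[x,y]\sat\psi\}$ passed to $f_\varphi$ in the definition of $\tau$. Since, as observed just before the lemma, at most one atom can have a prescribed $\Sigma^T$-restriction, $f_\varphi$ is defined on this set and necessarily returns $\type_\bM([x,y])$; hence $\tau(x,y)=\type_\bM([x,y])$.

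With the identity in hand, the remaining obligations are routine. The $\sim$-constraints required of a compass structure are just reflexivity, symmetry, and transitivity of the equivalence relation interpreting $\sim$ in $\bM$, read off through $x\sim y\iff\,\sim\,\in\tau(x,y)$. The relational constraints follow from the observation, already stated in the text, that $I\mathrel{\bar B}J$ implies $\type_\bM(I)\depBbar\type_\bM(J)$ and $I\mathrel{A}J$ implies $\type_\bM(I)\depA\type_\bM(J)$, transported to grid points through the bijection $[x,y]\leftrightarrow(x,y)$. For global fulfillment, fix a point $p=(x,y)$ and a request $\psi\in\req_R(\tau(p))$ with $R\in\{A,B,\bar B\}$; then $\langle \mathsf{R}\rangle\psi\in\type_\bM([x,y])$, so the semantics of $\langle \mathsf{R}\rangle$ yields an interval $J$ with $[x,y]\mathrel{R}J$ and $\bM,J\sat\psi$. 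Its grid point $q$ satisfies $p\mathrel{R}q$ and $\psi\in\tau(q)$, and since $\psi\in\closure(\varphi)$ we get $\psi\in\obs(\tau(q))=\tau(q)\cap\closure(\varphi)$; thus $q$ witnesses the fulfilment of $\psi$. As every request has this form, $\cG_\bM$ is globally fulfilling.

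I expect the main obstacle to be the verification that $\type_\bM([x,y])$ is a genuine $\varphi$-atom, and in particular conditions (iii) and (iv): their correctness rests on the reflexive, non-strict behaviour of the \emph{meets} relation on point intervals and on the exact way the $A$- and $\bar B$-successors of a singleton partition, so these are the only places where the specific interval semantics must be argued carefully. Everything downstream of the identity $\tau=\type_\bM$ is a transcription of observations already in the text.
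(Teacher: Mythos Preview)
Your proof is correct, and it is essentially the only reasonable way to argue the lemma: the paper itself states Lemma~\ref{lem:closurefunction} without proof, so there is nothing to compare against beyond noting that your argument is the expected one. Reducing everything to the identity $\tau(x,y)=\type_\bM([x,y])$ and then reading off the compass-structure conditions from the semantic facts already recorded in the text is exactly right, and your verification of atom conditions (iii)--(v) at singleton intervals is accurate.

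One minor remark: the paper's displayed definition of $\tau(x,y)$ is visibly missing the satisfaction condition on $\psi$ (as written, the second set does not depend on $[x,y]$), and you have silently supplied the intended reading $\{\psi\in\Sigma^T\cap\eclosure(\varphi):\bM,[x,y]\sat\psi\}$. That is the correct fix, but it would do no harm to flag the typo explicitly. Also, the $\sim$-constraints in the compass-structure definition are phrased as three order-respecting clauses rather than literal symmetry and transitivity (since $\tau(x,y)$ is only defined for $x\le y$); your appeal to the fact that $\sim$ is an equivalence relation in $\bM$ covers all three, but a sentence making that translation explicit would tighten the exposition.
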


Let $\varphi$ be an $\ABB\sim$ formula,  $\bG_{\varphi}$ be the set of all finite compass structures for $\varphi$, and 
 $\Sigma^T_{\spoiler} \subseteq \Sigma^T$. We define a \emph{$\Sigma^T_{\spoiler}$-response tree} as a tuple $\cT=(V,E,\cL_V, \cL_E)$ where
\begin{itemize}
\item $(V,E)$ is a finite tree equipped with two labeling functions $\cL_V:V\rightarrow \Sigma^T  \setminus\Sigma^T_{\spoiler}$
and $\cL_E:E \rightarrow  \bbI(\bbN)\times \Sigma^T_{\spoiler}$ 
(we denote
the projection of $\cL_E$ on the first component  by $\cL_E|_I$); 
\item for each root-to-leaf path $\pi=(v_0,v'_0)\ldots(v_n,v'_n)$ in $\cT$, $\rho_{\pi}=\cL_E(v_0,v_0')(\cL_E|_I(v_0,v_0'), \cL_V(v_0'))\ldots$ $\cL_E(v_n,v_n')(\cL_E|_I(v_n,v_n'), \cL_V(v_n'))$ is a successful admissible run for $\varphi$ (hereafter, we denote the set of all runs  $\rho_{\pi}$ associated with root-to-leaf paths in $\cT$ by $\runs_{\cT}$);
\item for each run $\rho_{\spoiler}$ in $\spruns$, there is a path $\pi=(v_0,v'_0)\ldots(v_n,v'_n)$
such that $\rho_{\spoiler}[0\ldots n]=\cL_{E}(v_0,v'_0)\ldots$ $\cL_E(v_n,v'_n)$.
\end{itemize}

It can be easily checked that a $\Sigma^T_{\spoiler}$-response tree encodes some (successful) $\Sigma^T_{\spoiler}$-response strategy $S_{\diamond}$. Then, checking whether a formula $\varphi$  and a  set  $\Sigma^T_{\spoiler} \subseteq \Sigma^T$ are a positive instance of the finite synthesis problem amounts to check whether there exists a
$\Sigma_{\spoiler}$-response tree for $\varphi$.

\begin{theorem}\label{thm:response-tree}
Let  $\varphi$ be an $\ABBsim$ formula and $\Sigma^T_{\spoiler} \subseteq \Sigma^T$. Then, ($\varphi$, $\Sigma_{\spoiler} $) is a positive instance of the finite-synthesis problem if and only if there exists a  $\Sigma^{T}_{\spoiler}$-response tree  $\cT$ for  $\varphi$.
\end{theorem}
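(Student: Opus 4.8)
The plan is to prove the two directions of the biconditional separately, exploiting the fact (noted in the paper) that a $\Sigma^T_{\spoiler}$-response tree $\cT$ \emph{encodes} a response strategy $S_\diamond$, and conversely that a winning strategy can be ``unfolded'' into a tree. The key structural observation driving the whole argument is that, over finite linear orders, the set $\spruns(\varphi,\Sigma^T_{\spoiler})$ of spoiler-projections is branching only finitely at each step: at a given position, $\spoiler$ must choose an interval $[x,y]$ respecting condition $3$ of Definition \ref{def:admrun} together with a labeling $\sigma\subseteq\Sigma^T_{\spoiler}$, and since $\Sigma^T$ is finite there are only finitely many such choices. This finite branching, combined with the finiteness of successful runs (each of length $n=2\cdot m\cdot(m+1)$ for some $m$), is what makes a \emph{finite} tree suffice.

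For the direction ($\Rightarrow$), I would start from a winning strategy $S_\diamond$ witnessing that $(\varphi,\Sigma^T_{\spoiler})$ is a positive instance. First I would build the (a priori infinite) strategy tree whose edges are labeled by $\spoiler$'s moves $(\cL_E = ([x,y],\sigma))$ and whose vertices carry $\duplicator$'s replies $\cL_V = S_\diamond(\rho[0\ldots 2i])\in\cP(\Sigma^T\setminus\Sigma^T_{\spoiler})$. Because $S_\diamond$ is winning, along every branch there is a \emph{finite} prefix $\rho[0\ldots n]$ that is a successful admissible run; I would prune each branch at the first such $n$, making every branch finite. By the finite-branching observation above and K\H{o}nig's lemma, the pruned tree is finite, and by construction it satisfies all three bullet conditions: the root-to-leaf runs $\rho_\pi$ are exactly the successful prefixes, and every $\rho_{\spoiler}\in\spruns$ is covered because the unpruned tree branched on \emph{all} of $\spoiler$'s admissible moves.

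For the direction ($\Leftarrow$), given a $\Sigma^T_{\spoiler}$-response tree $\cT$, I would read off a response strategy by setting $S_\diamond(\rho[0\ldots 2i]) = \cL_V(v')$, where $(v,v')$ is the (unique) edge of $\cT$ whose edge-labels spell out $\spoiler$'s moves $\rho[0],\rho[2],\ldots,\rho[2i]$; the third bullet guarantees such a branch exists for every $\rho_{\spoiler}\in\spruns$, and I would check it is well-defined (consistent prefixes map to the same reply). The second bullet then says that following $S_\diamond$ against any $\rho_{\spoiler}$ leads, at the corresponding leaf, to a successful admissible run, which is precisely the winning condition of Definition \ref{synthesis}. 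Here Lemma \ref{lem:closurefunction} and Proposition \ref{prop:compass-structure} are used implicitly to guarantee that the runs $\rho_\pi$ labeling the tree genuinely correspond to models of $\varphi$, so that ``successful'' has its intended semantic meaning.

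The main obstacle I expect is the bookkeeping that makes the tree-to-strategy correspondence \emph{total and well-defined}: a response strategy is a function on \emph{all} of $\pruns(\varphi,\Sigma^T_{\spoiler})$, whereas the tree only records prefixes up to the winning leaves, so I must argue that $\duplicator$'s behavior past a winning leaf is irrelevant (as the paper remarks, once a successful prefix is reached $\duplicator$ ``can safely ignore the rest of the run,'' exactly as in reachability games). The other delicate point is verifying that pruning does not destroy the covering property (third bullet): I would need to confirm that truncating a branch at its first successful prefix still leaves, for every admissible $\spoiler$-projection, an initial segment realized in $\cT$, which follows because $\spoiler$'s moves along any branch up to its leaf are an admissible-run prefix and every admissible $\rho_{\spoiler}$ agrees with some branch at least up to the first winning position.
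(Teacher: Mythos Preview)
Your proposal is correct and fills in considerably more detail than the paper provides: the paper does not give a proof of Theorem~\ref{thm:response-tree} at all, but simply precedes it with the remark that ``it can be easily checked that a $\Sigma^T_{\spoiler}$-response tree encodes some (successful) $\Sigma^T_{\spoiler}$-response strategy $S_{\diamond}$'' and then states the theorem. Your two-direction argument---unfolding a winning strategy into a tree, pruning each branch at the first successful prefix, and invoking K\H{o}nig's lemma (via finite branching) for finiteness in one direction; reading off a strategy from the tree in the other---is exactly the standard game-tree/strategy correspondence the paper has in mind, and your identification of the two delicate bookkeeping points (totality of $S_\diamond$ past the leaves, and preservation of the covering property under pruning) is apt.

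Two small over-claims. First, you do not need Lemma~\ref{lem:closurefunction} or Proposition~\ref{prop:compass-structure} here: ``successful admissible run'' is defined directly via the induced interval structure $\bM_\rho$ and the semantic clause $\bM_\rho,[0,0]\models\varphi$, and compass structures only enter later, in the decidability argument (Lemma~\ref{lemma:contraction} and Theorem~\ref{thm:decidability}). Second, where you write ``the (unique) edge,'' the definition of a $\Sigma^T_{\spoiler}$-response tree does not guarantee that distinct outgoing edges from a node carry distinct $\cL_E$-labels; you should either pick one such edge (any choice still yields a winning strategy), or first normalize the tree by collapsing siblings with identical $\cL_E$-labels.
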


Unfortunately, this is not the end of the story. Every $\Sigma^T_{\spoiler}$-response tree  $\cS=(\cT, \cL_\pi)$ is finite 
by definition, but this is not sufficient to conclude that the finite synthesis problem is decidable. To this end, we must
provide a bound on the height of the tree depending on the size of $\varphi$.
Let $\cG=(\bbD\times\bbD,\tau)$ be a compass structure. For all $x,y \in D$, with $x\leq y$, we define the \emph{multi-set of atoms} $M(x,y)=\{F: \exists x' \in D (x'\leq y \wedge x'\sim x \wedge \tau(x',y)=F)\}$, where the number of copies of $F$ in $M(x,y)$, denoted by $|M(x,y)(F)|$, is equal to $|\{x' \in D : x'\leq y \wedge x'\sim x \wedge \tau(x',y)=F\}$.
Moreover, for all $y \in D$, we define the multiset-collection $\cM(y)$ as the multi-set of multi-sets of atoms such that, for each multiset of atoms $M$, $\cM(y)(M)=|\{ [x]_\sim: x\in D \wedge  x\leq y \wedge M(x,y)=M   \}|$. Finally, we define a partial order $\leq$ over the set of all multi-set collections as follows: for any pair of multi-set collections $\cM, \cM'$, $\cM \leq \cM'$ if and only if there exists an injective multi-set function $g \subseteq \cM \times \cM'$ such that, for each pair $(M,M')\in g$, $M \subseteq M'$ (an injective function $g\subseteq \cM \times \cM'$ between two multisets  is itself a multiset  such that $g|_1=\cM$ and $g|_2\subseteq \cM'$, where $|_i$ is simply the projection on the $i$-th component of a tuple). The following property of $\leq$ is not difficult to prove, but it is crucial for the decidability proof.

\begin{lemma}\label{lemma:wqo}
$\leq$ is a well-quasi-ordering (WQO) over multiset collections.
\end{lemma}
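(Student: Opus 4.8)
The plan is to establish that $\leq$ is a well-quasi-ordering on multiset collections by building it up from known WQOs via standard closure properties, the most important being Higman's lemma (finite multisets/sequences over a WQO form a WQO under the embedding order) and Dickson's lemma. The key observation is that each layer of the construction is a "finite multisets over an already-established WQO" step, so I would assemble the proof as a tower of three Higman-style applications.

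First I would fix the base case. The set $\atoms(\varphi)$ is finite (it is a subset of the powerset of the finite set $\eclosure(\varphi)$), so equality on atoms is trivially a WQO on this finite set. Next I would handle the level of multi-sets of atoms: a multi-set of atoms $M(x,y)$ is a finite multiset over the finite WQO $\atoms(\varphi)$, and the order used between them is multiset inclusion $\subseteq$. By Higman's lemma (equivalently Dickson's lemma, since over a finite alphabet multiset inclusion coincides with the componentwise order on occurrence-count vectors indexed by the finitely many atoms), the set of finite multi-sets of atoms ordered by $\subseteq$ is a WQO. This is the crucial intermediate WQO, call it $(\mathcal{A}, \subseteq)$.

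Then I would lift to multiset collections. A multiset collection $\cM(y)$ is itself a finite multiset whose elements range over $\mathcal{A}$, and the order $\leq$ between collections is exactly the induced multiset embedding: $\cM \leq \cM'$ iff there is an injective multiset matching $g$ pairing each $M\in\cM$ with a distinct $M'\in\cM'$ so that $M\subseteq M'$. This is precisely the "finite multisets over a WQO" order generated by $(\mathcal{A},\subseteq)$. Applying Higman's lemma one more time, with the already-proven WQO $(\mathcal{A},\subseteq)$ as the underlying order, yields that $\leq$ is a WQO over multiset collections, which is the statement of the lemma. The only genuine care needed is to check that the injective-matching definition of $\leq$ given in the text coincides with the standard multiset embedding order for which Higman's lemma applies; this amounts to verifying that requiring $g|_1 = \cM$ and $g|_2 \subseteq \cM'$ with each matched pair satisfying $M \subseteq M'$ is exactly the embedding of $\cM$ into $\cM'$ over the ground order $\subseteq$.

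I expect the main obstacle to be purely bookkeeping rather than conceptual: one must be scrupulous about the distinction between multiset inclusion $\subseteq$ (used inside $\mathcal{A}$) and the multiset embedding order induced on collections, and verify that a multiset over a WQO is again a WQO in the precise formulation Higman's lemma supplies (this follows because a finite multiset can be presented as a finite sequence up to reordering, and the subsequence-embedding order on sequences over a WQO is a WQO, which descends to the multiset quotient). Once the three-layer tower is set up correctly, each step is a direct invocation of Higman's lemma with no additional combinatorial work, so I would keep the proof short and reference Higman's lemma explicitly at each of the two lifting steps.
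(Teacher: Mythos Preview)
Your proposal is correct and follows exactly the standard route one would expect; the paper itself omits the proof, remarking only that it ``is not difficult to prove,'' and your layered application of Dickson's lemma (finite multisets over a finite alphabet, ordered by inclusion, are isomorphic to $\bbN^k$ under componentwise $\le$) followed by Higman's lemma (finite multisets over a WQO, ordered by injective embedding, form a WQO) is precisely the intended argument. Your care in matching the paper's injective-multiset-function definition of $\le$ to the multiset embedding order is the only nontrivial bookkeeping, and you handle it correctly.
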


Let $\cG=(\bbD\times\bbD,\tau)$ be a compass structure for $\varphi$. We say that $\cG$ is \emph{minimal} 
if and only if for all $y<y'$ in $D$, $\cM(y)\not\leq \cM(y')$. Given an $\ABBsim$ formula $\varphi$
and a $\Sigma^T_{\spoiler}$-response tree  $\cT$ for it, we say that  $\cT$ is \emph{minimal} if  and only if for each 
$\rho \in \runs_{\cT} $, $\cG_{\rho_{\bM}}$ is a minimal compass structure. The following result allows us to restrict our attention to minimal $\Sigma^T_{\spoiler}$-response trees.

\begin{lemma}\label{lemma:contraction}
Let $\varphi$ be an $\ABBsim$ formula and $\Sigma^T_{\spoiler} $ be a finite set of its variables. 
Then, if there exists a $\Sigma^T_{\spoiler}$-response tree  $\cT$ for  $\varphi$, then there exists a minimal
$\Sigma^T_{\spoiler}$-response tree  $\cT'$ for $\varphi$.
\end{lemma}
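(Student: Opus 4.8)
The plan is to reduce the statement to a purely combinatorial \emph{contraction} on single compass structures and then lift that contraction to the whole tree by a well-founded minimisation. By Lemma~\ref{lem:closurefunction} and Proposition~\ref{prop:compass-structure}, each run $\rho\in\runs_{\cT}$ corresponds to a finite fulfilling compass structure $\cG_{\rho_{\bM}}=(\bbD\times\bbD,\tau)$ carrying $\tilde\varphi$ at every point, and $\cT$ is minimal exactly when every such $\cG_{\rho_{\bM}}$ is a minimal compass structure. Hence it suffices to show (i) that any non-minimal successful compass structure can be replaced by a strictly smaller successful one, and (ii) that this replacement can be carried out coherently across the branches of $\cT$.

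For step (i), suppose $\cG=(\bbD\times\bbD,\tau)$ is fulfilling, satisfies $\tilde\varphi$ everywhere, and is not minimal, so there are $y<y'$ in $\bbD$ with $\cM(y)\le\cM(y')$, witnessed by an injective multiset function $g$ matching each $\sim$-class at column $y$ to a $\sim$-class at column $y'$ whose multiset of atoms dominates it. I would delete the time points $y+1,\dots,y'$ and reconnect every interval $[x,z]$ straddling the cut (with $x\le y<y'<z$) as follows: since the class of $x$ at column $y$ is $g$-matched to a class at column $y'$ whose multiset of atoms dominates $M(x,y)$, pick a representative $x''\le y'$ in that class with $\tau(x'',y')=\tau(x,y)$ and give the shortened interval the type $\tau(x'',z)$. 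Consistency is then essentially forced by the algebra of the dependence relations: transitivity of $\depBbar$ together with $\tau(x,y)=\tau(x'',y')\depBbar\tau(x'',z)$ preserves $\bar B$-consistency along the surviving left endpoint $x$, while $\depA\circ\depBbar\subseteq\depA$, together with the fact that all intervals starting at a common point share the signature $\obs\cup\req_B\cup\req_{\bar B}$ (equal to the $\req_A$ of any interval meeting that point), preserves $A$-consistency. Global fulfilment survives because the kept suffix retains all its witnesses and the unchanged $\req_A$-signatures guarantee that the meet-requests of the surviving left part are still fulfilled; $\tilde\varphi$ remains everywhere. The resulting $\cG'$ is a fulfilling compass structure for $\varphi$ whose domain is strictly smaller.

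For step (ii), I would minimise a well-founded measure. Among all $\Sigma^T_{\spoiler}$-response trees for $\varphi$, pick one, $\cT$, minimising the finite multiset $\{\,m_\rho:\rho\in\runs_{\cT}\,\}$ of domain sizes, ordered by the (well-founded) multiset extension of $<$ on $\bbN$. If $\cT$ were not minimal, some leaf run $\rho$ would induce a non-minimal $\cG_{\rho_{\bM}}$; applying step (i) at the offending pair $y<y'$ yields a strictly shorter successful run, and performing the same splice simultaneously on every leaf whose branch agrees with $\rho$ up to the completion of column $y$ turns the subtree hanging there into the contracted responses, strictly lowering the measure and contradicting minimality. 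Thus the minimiser is already a minimal response tree.

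The main obstacle is making the surgery of step (ii) produce a genuine response tree rather than merely a family of shorter structures: I must check that the rerouting via $g$ depends only on the part of the play already visible to $\duplicator$ (the columns up to $y$ and $y'$, fixed along the shared prefix), so that it defines a single strategy, and that coverage is preserved, i.e.\ every $\rho_{\spoiler}\in\spruns(\varphi,\Sigma^T_{\spoiler})$ still meets a leaf after the deleted columns have been removed and the later ones relabelled. The clean way to secure both is to apply the contraction uniformly, with one common $g$, to the entire subtree rooted at the node where column $y'$ is completed, so that the relabelling of straddling intervals is identical on all continuations and no spoiler strategy is left uncovered.
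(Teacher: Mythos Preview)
Your approach and the paper's share the same skeleton: locate $y<y'$ with $\cM(y)\le\cM(y')$ on some branch, extract from $g$ a point-level injection $f:\{0,\dots,y\}\to\{0,\dots,y'\}$ with $\tau(x,y)=\tau(f(x),y')$ and $x\sim x'\Leftrightarrow f(x)\sim f(x')$, and graft the restricted, $f^{-1}$-relabelled subtree below the node $v_j$ where column $y'$ is completed in place of the subtree below the node $v_i$ where column $y$ is completed. The paper carries out this surgery directly on the tree and argues termination by iterated shrinking; you instead isolate a compass-level contraction (step~(i)) and then invoke a well-founded multiset minimisation on domain sizes (step~(ii)). Both terminate; your decomposition makes the $\depBbar$/$\depA$ consistency check cleaner, while the paper's explicit injective $f$ is what actually drives the edge relabelling---your ``pick a representative $x''$'' must likewise be made injective (possible precisely because $M(x,y)$ embeds as a sub-multiset of its $g$-image), otherwise the rewriting $x\mapsto f^{-1}(x)$ on edge labels is ill-defined. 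One point to tighten: the splice is \emph{not} applied branch-by-branch below $v_i$ as your first formulation of step~(ii) suggests (other branches through $v_i$ need not even reach column $y'$, let alone with the same $\cM(y')$); rather, as your last paragraph correctly says and as the paper does, the whole subtree at $v_i$ is discarded and replaced by the single contracted subtree taken from $v_j$, with coverage of all $\rho_\spoiler$ recovered because the retained edges out of $v_j$ (those with left endpoint in $\img(f)$ or $>y'$) relabel bijectively onto all legal spoiler moves at $v_i$.
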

\begin{proof}\emph{(sketch)}
Let  $\cT=(V,E,\cL_V, \cL_E)$ be a $\Sigma^T_{\spoiler}$-response tree for $\varphi$. Suppose that $\cT$ is not 
minimal. We show that there exists a smaller $\Sigma^T_{\spoiler}$-response tree for $\varphi$ which can be obtained 
by contracting one among the paths of $\cT$ that violate minimality. Notice that, in doing that, we prove that any defect (with respect to minimality) can be fixed by reducing the size of the tree in such a way that the resulting tree is still  
a $\Sigma^T_{\spoiler}$-response tree for $\varphi$.

Since $\cT$ is not minimal, there is a run $\rho \in \runs_{\cT} $ such that $\cG_{\rho_{\bM}}=(\bbD\times\bbD,\tau)$ is not a minimal compass structure. Then, there are $y<y'$ in $D$ such that $\cM(y)\leq \cM(y')$.
Let $g\subseteq \cM \times \cM'$ be the 
function from $\cM(y)$ to $ \cM(y')$, whose existence is guaranteed by definition of $\leq$. By the definition of the collections, injectivity of $g$ implies the existence of an injective function $f:\{0,\ldots,y\}\rightarrow \{0,\ldots,y'\}$ such that $\tau(x,y)=\tau(f(x),y')$ and  for each pair $0\leq x \leq x'\leq y$, $x\sim x'$ if and only if $f(x) \sim f(x')$.
Let $\pi=(v_0,v'_0)\ldots(v_n,v'_n)$ be a root-to-leaf path such that  $\rho=\cL_E(v_0,v_0'),(\cL_E|_I(v_0,v_0'),$ $\cL_V(v_0'))\ldots \cL_E(v_n,v_n')(\cL_E|_I(v_n,v_n'), \cL_V(v_n'))$ (the existence of such a path is guaranteed by 
the definition of $\Sigma_{\spoiler}$-response tree).
Given a node $v\in V$, we denote by  $\cT_v=(V_v, E_v, \cL_{E_v}, \cL_{V_v})$ the  sub-tree of $\cT$ rooted at $v$.
Let $v_i$ (resp., $v_j$) be a node in $\pi$ such that $i$ (resp., $j$) is the minimum index for which the interval  $[x,y'']=\cL_E|_I(v_{i'},v_{i'+1}')$ (resp., $[x,y'']=\cL_E|_I(v_{j'},v_{j'+1}')$) satisfies $y''>y$ (resp., $y''>y'$), for all $i'\geq i$ (resp., $j'\geq j$). 

Let 
$V'_{v_j}=\{ v \in V_{v_j}: \exists \pi'=(v_0,v_0')\ldots(v_m, v'_m) \text{ in $\cT_{v_j}$  s.t.  } v_0=v_j \wedge v'_m=v 
\wedge \forall 0\leq i \leq m \text{ if }  \cL_{E_{v_j}}(v_i,v'_i)|_I=[x,y''] \text{ then } x>y' \vee x\in \img(f) \}$.
The set  $V'_{v_j}$ collects all and  only the nodes reachable from $v_j$ through a path of edges which feature only intervals $[x,y'']$ such that either $x\geq y'$ (that is, the point has been introduced after $y'$) or $x \in \img(f)$.
Moreover, let 
$E'_{v_j}= \{(v,v')\in E_{v_j}: v,v'\in V'_{v_j}\cup \{v_j\}\}$ be the set of edges restricted to the set $V'_{v_j}$ and let $\cL'_{V_{v_j}}(v)=\cL_{V_{v_j}}(v)$ for all $v \in V'_{v_j}$. 
Finally, let $\Delta=y'-y$. We define $\cL'_{E_{v_j}}$ in such a way that, for each $(v,v')\in E'_{v_j}$,
if $(\cL_{E_{v_j}}(v,v')=)\cL_E(v,v')=([x,y''], \sigma_{\spoiler})$, then $\cL'_{E_{v_j}}(v,v')=([x',y'-\Delta],\sigma_{\spoiler})$, where  $x'=f^{-1}(x)$ if $x\leq y'$ (in such a case, by construction, $x\in\img(f)$ and, since $f$ is injective, it can be inverted on $x$) or $x'=x-\Delta$ otherwise.  

We complete the construction by replacing (in $\cT$) the subtree $\cT_{v_i}$ by the subtree $\cT'_{v_j}=(V'_{v_j}\cup \{v_i \},  E'_{v_j} \setminus \{ (v_j, v) \in E \} \cup \{ (v_i,v): \exists (v_j,v)\in E \} ,\cL''_{V_{v_j}} , \cL''_{E_{v_j}})$,
where $\cL''_{V_{v_j}}(v)=\cL'_{V_{v_j}}(v)$
for each $v\in V'_{v_j}$, $\cL''_{V_{v_j}}(v_i)=\cL_V(v_i)$, 
 $\cL''_{E_{v_j}}(v,v')=\cL'_{E_{v_j}}(v,v')$
for each $(v,v')\in E'_{v_j}$ with $v,v'\in V'_{v_j}$, 
and $\cL''_{E_{v_j}}(v_i, v)=\cL'_{E_{v_j}}(v_j, v)$ 
for each $(v_j,v)\in E'_{v_j}$ with $v\in V'_{v_j}$.
It is possible to prove (by induction) that the  tree $\cT'$ obtained from such a contraction operation on subtrees is still a $\Sigma^T_{\spoiler}$-response tree for $\varphi$.
\end{proof}
 

This proof provides the necessary insights for devising a decision procedure to establish whether or not $(\varphi, \Sigma_{\spoiler})$ is a positive instance of the finite synthesis problem. Such a procedure visits a (candidate) $\Sigma_{\spoiler}$-response tree $\cT$ for $\varphi$ in a breadth-first fashion. At each step,  the number of total edges from level $l$ to level $l+1$ of the tree is finite, and thus their labeling $\cL_E$ as well as labeling $\cL_V$ for the nodes at level $l+1$ can be nondeterministically guessed. The procedure returns success if it finds a  $\Sigma_{\spoiler}$-response tree for $\varphi$; it returns failure  (that is, the generated candidate tree is not minimal) if it either introduces some local inconsistency or it produces a path such that there exist two coordinates $y_i<y_j$ with $\cM(y_i) \leq \cM(y_j)$. Being $\leq$ is a WQO guarantees that a path cannot be arbitrarily long. 

\begin{theorem}\label{thm:decidability}
The finite synthesis problem for $\ABBsim$ is decidable.
\end{theorem}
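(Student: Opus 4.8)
The plan is to assemble the decision procedure from the structural results already established and to argue that it terminates, is correct, and always halts. The overall strategy rests on Theorem~\ref{thm:response-tree}, which reduces the finite synthesis problem for $\varphi$ to the existence of a $\Sigma^T_{\spoiler}$-response tree $\cT$ for $\varphi$, and on Lemma~\ref{lemma:contraction}, which guarantees that whenever such a tree exists, a \emph{minimal} one exists as well. Thus the procedure only needs to search for minimal $\Sigma^T_{\spoiler}$-response trees, and it may safely prune any branch that violates minimality.

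First I would describe the search procedure explicitly. It builds a candidate tree level by level in breadth-first fashion. At each level $l$, only finitely many edge-labels $\cL_E \in \bbI(\bbN) \times \Sigma^T_{\spoiler}$ and node-labels $\cL_V \in \cP(\Sigma^T \setminus \Sigma^T_{\spoiler})$ are possible (the set $\Sigma^T$ is finite and, by condition~3 of Definition~\ref{def:admrun}, the reachable intervals are bounded by the current right endpoint), so the successors of each node can be enumerated. Along each root-to-leaf path $\pi$, the procedure maintains the induced compass structure $\cG_{\rho_{\pi,\bM}}$ and, for each right endpoint $y$ seen so far, the multiset-collection $\cM(y)$. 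The procedure rejects a branch as soon as it detects a local inconsistency in an atom (violating the atom conditions (i)--(v) or the $\depA$/$\depBbar$ consistency relations) or it discovers two coordinates $y_i < y_j$ along the path with $\cM(y_i) \leq \cM(y_j)$, i.e.\ a minimality defect. A branch is accepted as a completed successful admissible run once $\bM_{\rho[0\ldots n]}, [0,0] \sat \varphi$ and the $\teclosure$-consistency condition of a successful run holds; the procedure returns \emph{success} if it can thereby close off a full $\Sigma^T_{\spoiler}$-response tree (covering every $\rho_{\spoiler} \in \spruns$), and \emph{failure} otherwise.

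The crux of the termination argument is Lemma~\ref{lemma:wqo}: since $\leq$ is a well-quasi-ordering over multiset-collections, no infinite sequence $\cM(y_0), \cM(y_1), \ldots$ can be $\leq$-antichain-like forever, so any path respecting the minimality condition (no $y_i < y_j$ with $\cM(y_i) \leq \cM(y_j)$) must be finite, with a length bounded purely in terms of the WQO on collections built from $\atoms(\varphi)$ and $\Sigma^T$, hence in terms of $|\varphi|$. Combined with the finite branching at each node, König's lemma yields that the whole minimal candidate tree has bounded height and finite size, so the breadth-first search explores only finitely many candidates and always halts. For correctness, soundness is immediate because any tree the procedure accepts is by construction a $\Sigma^T_{\spoiler}$-response tree, which by Theorem~\ref{thm:response-tree} witnesses a positive instance; completeness follows because if $(\varphi,\Sigma^T_{\spoiler})$ is positive, Theorem~\ref{thm:response-tree} gives a response tree and Lemma~\ref{lemma:contraction} a minimal one, and the search, exploring all minimal candidates up to the bounded height, is guaranteed to find it.

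The main obstacle I expect is ensuring that the minimality pruning is \emph{sound} for completeness — that cutting every path the moment a pair $y_i < y_j$ with $\cM(y_i)\le\cM(y_j)$ appears never discards a genuinely needed strategy. This is precisely what Lemma~\ref{lemma:contraction} secures: any such defect can be repaired by contracting the offending subtree (via the injective map $f$ between the corresponding coordinate blocks) while preserving the property of being a $\Sigma^T_{\spoiler}$-response tree, so restricting the search to minimal trees loses no positive instances. Granting that lemma, the remaining work is routine bookkeeping: verifying that the guessed labels indeed form admissible runs (conditions 1--3 of Definition~\ref{def:admrun}), that the coverage condition for $\spruns$ is checkable at each finite stage, and that the WQO bound translates into an effective halting criterion.
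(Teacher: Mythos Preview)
Your proposal is correct and follows essentially the same approach as the paper: reduce via Theorem~\ref{thm:response-tree} to the existence of a $\Sigma^T_{\spoiler}$-response tree, restrict to minimal ones via Lemma~\ref{lemma:contraction}, and perform a breadth-first (nondeterministic) search that prunes on local inconsistencies or minimality violations, with termination guaranteed by the WQO of Lemma~\ref{lemma:wqo}. You spell out more of the bookkeeping (K\"onig's lemma for finiteness of the candidate tree, explicit soundness/completeness) than the paper does, but the underlying argument is the same; the only small overreach is the phrase ``bounded purely in terms of $|\varphi|$'', which suggests a uniformly computable bound you do not actually need---finiteness of each bad sequence is enough for the halting argument.
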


\section{Conclusion}\label{sec:conclusion}
In this paper, we explored the synthesis problem for meaningful fragments of $HS$ in the presence of an equivalence relation over points. On the negative side, we proved that the computational complexity of the synthesis problem is generally worse than that of the corresponding satisfiability problem (from elementary/decidable to nonelementary/undecidable). On the positive side, we showed that the increase in expressiveness makes it possible to capture new interesting temporal conditions.

\bibliographystyle{eptcs}
\bibliography{biblio}
\end{document}